\newtheorem{theorem}{Theorem}
\newtheorem{lemma}{Lemma}
\newtheorem{remark}{Remark}
\newtheorem{definition}{Definition}[section]
\newtheorem{corollary}{Corollary}
\newtheorem{proposition}{Proposition}
\newcommand{\EE}{\mathbb{E}}
\newcommand{\dom}{{\bf dom}\;}
\newcommand{\dEPI}{\delta_{\mathsf{EPI}, t}}
\newcommand{\Var}{\operatorname{Var}}
\newcommand{\R}{\mathbb{R}}
\title{Wasserstein Stability of the Entropy Power Inequality \\
for Log-Concave Densities}
\author{Thomas~A.~Courtade$^*$, Max Fathi$^{\dagger}$ and Ashwin Pananjady$^*$\\
~\\
$^{*}$Department of Electrical Engineering and Computer Sciences\\
$^{\dagger}$Department of Mathematics\\
University of California, Berkeley
}
\begin{document}

\maketitle

\abstract{We establish quantitative stability results for the entropy power inequality (EPI).  Specifically, we show that if uniformly log-concave densities  nearly saturate the EPI, then they must  be close to Gaussian densities in the quadratic Wasserstein distance.  Further, if one of the densities is  log-concave and the other is Gaussian, then the deficit in the EPI can be controlled in terms of the $L^1$-Wasserstein distance.  As a counterpoint, an example shows that the EPI can be unstable with respect to the quadratic Wasserstein distance when densities are uniformly log-concave on sets of measure arbitrarily close to one. Our stability results can be extended to non-log-concave densities, provided certain regularity conditions are met.   The proofs are based on  optimal  transportation.
}

\section{Introduction}

Let $X$ and $Y$ be independent random vectors on $\mathbb{R}^n$ with corresponding probability measures $\mu$ and $\nu$, each absolutely continuous with respect to Lebesgue measure. %
The celebrated entropy power inequality (EPI) proposed by Shannon \cite{shannon48} and proved  by Stam \cite{stam1959some}  asserts that 
\begin{align}
N(\mu*\nu)\geq N(\mu) + N(\nu),\label{ShannonEPI}
\end{align}
where  $N(\mu) := \tfrac{1}{2\pi e} e^{2 h(\mu)/n}$ denotes the  entropy power of $\mu$, and $h(\mu) = h(X) = -\int f \log f$ is entropy associated to the density $f$ of $X\sim \mu$.  %
For a parameter $t\in(0,1)$, let us define
\begin{align} 
\dEPI(\mu, \nu) := h(\sqrt{t} X + \sqrt{1-t}Y) - \Big(t h(X) + (1-t)h(Y)\Big).
\end{align}
Unaware of the works by Shannon, Stam and Blachman \cite{blachman1965convolution}, Lieb \cite{lieb1978proof} rediscovered the EPI by establishing $\dEPI(\mu,\nu)\geq 0$  and noting its equivalence to  \eqref{ShannonEPI}.    Due to the equivalence of the Shannon-Stam and Lieb inequalities, we shall generally refer to both as the EPI.  

 It is  well known that  $\dEPI(\mu,\nu)$ vanishes if and only if $\mu,\nu$ are Gaussian measures that are identical up to translation\footnote{Lieb did not  settle the cases of equality; this was  done later by Carlen and Soffer \cite{carlen1991entropy}.}.    However, despite the  fundamental role the EPI plays in information theory,  few stability estimates are known.  Specifically,  {\em if $\dEPI(\mu, \nu)$ is small, must $\mu$ and $\nu$ be  `close' to Gaussian measures,  which are themselves `close' to each other,   in  a precise and quantitative sense?}  This is our motivating question. %

Toward answering this question, our main result is a  dimension-free, quantitative stability estimate for the EPI.  More specifically, we show that if measures $\mu,\nu$ have uniformly log-concave densities and nearly saturate either form of the EPI, then they must also be close to Gaussian measures in quadratic Wasserstein distance.  We also show that the EPI is \emph{not} stable (with respect to the same criterion) in situations where the densities  \emph{nearly} satisfy the same regularity conditions. A  weaker deficit estimate is obtained involving the $L^1$-Wasserstein distance for log-concave measures when one of the two variables is Gaussian, and dimension-dependent quadratic Wasserstein estimates in certain more general situations. 

Before stating the main results, let us first introduce some notation.  We let 
$\Gamma \equiv \Gamma(\mathbb{R}^n)$ denote the set of centered Gaussian probability measures on $\mathbb{R}^n$, and let $\gamma$ denote the standard Gaussian measure on $\mathbb{R}^n$. That is\footnote{Explicit dependence of quantities on the ambient dimension $n$ will be suppressed in situations where our arguments are the same in all dimensions.}, 
$$d\gamma(x) = d \gamma^n(x) =  e^{-|x|^2/2}\frac{dx}{(2\pi)^{n/2}}.$$  
 Next, we recall that  the quadratic Wasserstein distance between probability measures $\mu,\nu$ is defined according to 
\begin{align*}
W_2(\mu,\nu) = \inf \left( \EE | X-Y  |^2\right)^{1/2}, 
\end{align*}
where $| \cdot |$ denotes the $L^2$ metric on $\R^n$  and the infimum is over all couplings on $X,Y$ with marginal laws $X\sim \mu$ and $Y\sim \nu$.  If $X\sim \mu$ is a centered random vector, then we write $\Sigma_{\mu} = \EE X X^{\top}$ to denote the covariance matrix of $X$.  For two centered probability measures $\mu, \nu$, we define the quantity
\begin{align}
\mathsf{d}^2_F(\mu,\nu) : = \inf_{\theta\in(0,1)}\left\| \sqrt{\theta}\Sigma_{\mu}^{1/2} -  \sqrt{1-\theta}\Sigma_{\nu}^{1/2}\right\|_F^2,\notag
\end{align}
where $\|\cdot\|_F$ denotes  Frobenius norm, to provide a convenient measure of distance between the second order statistics of $\mu,\nu$.  In particular, $\mathsf{d}^2_F(\mu,\nu) = 0$ if and only if  $\Sigma_{\mu}$ and $\Sigma_{\nu}$ are proportional. We  remark here that both forms of the EPI are invariant to translation of the measures $\mu,\nu$.  Thus, our persistent assumption of centered probability measures is for convenience and comes without loss of generality. 

 \subsection*{Organization}  The rest of this paper is organized as follows:  Sections \ref{subs:Main} and \ref{subs:prior} describe our main stability results for log-concave densities and the relationship to previous work, respectively. Section \ref{subs:unstable} gives an example where the EPI is not stable with respect to quadratic Wasserstein distance when  regularity conditions are not met. Section \ref{sec:proofs} gives proofs of our main results and a brief discussion of techniques, which are based on optimal mass transportation.  We conclude in Section \ref{sec:extensions} with extensions of our results to more general settings.

\section{Main Results}
This section describes our main results, and also compares to previously known stability estimates.  Proofs are generally deferred until Section \ref{sec:proofs}. 

\subsection{Stability of the EPI for  log-concave densities} \label{subs:Main}

Our main result is the following:
\begin{theorem}  \label{thm:LogConcaveStable}
Let $\mu = e^{-\varphi}\gamma$ and $\nu = e^{-\psi}\gamma$ be centered probability measures, where $\varphi$ and $\psi$ are convex. 
Then 
\begin{align}
\dEPI(\mu, \nu)  \geq   \frac{t (1 - t) }{2  } \inf_{\gamma_1,\gamma_2\in \Gamma} \Big( W_2^2(\mu,\gamma_1) + W_2^2(\nu,\gamma_2) + W_2^2(\gamma_1,\gamma_2) \Big) . \label{thm1Eq}
\end{align}
\end{theorem}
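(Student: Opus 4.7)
The plan is to combine Caffarelli's contraction theorem with a conditioning-and-change-of-variables trick on Brenier maps, followed by a quantitative concavity estimate for $\log\det$. Let $T = \nabla\phi$ and $S = \nabla\psi$ denote the Brenier maps pushing $\gamma$ forward to $\mu$ and to $\nu$. Because $-\log(d\mu/dx) = \varphi(x) + |x|^2/2 + \mathrm{const}$ is $1$-strongly convex on $\R^n$ when $\varphi$ is convex, Caffarelli's contraction theorem gives $0 \preceq DT, DS \preceq I$ pointwise; this upper bound on the Jacobians is the quantitative input that will drive the whole stability estimate.

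To lower-bound $h(\sqrt{t}X + \sqrt{1-t}Y)$ I would introduce a rotation of the independent standard Gaussians $Z_1, Z_2$: set $\zeta = \sqrt{t}Z_1 + \sqrt{1-t}Z_2$ and $\zeta' = \sqrt{1-t}Z_1 - \sqrt{t}Z_2$, still an independent standard Gaussian pair. Writing $W = \sqrt{t}T(Z_1) + \sqrt{1-t}S(Z_2)$, for each frozen $\zeta'$ the map $\zeta \mapsto W$ is the gradient in $\zeta$ of the convex function $\phi(\sqrt{t}\zeta + \sqrt{1-t}\zeta') + \psi(\sqrt{1-t}\zeta - \sqrt{t}\zeta')$, hence injective with Jacobian $t\,DT(Z_1) + (1-t)\,DS(Z_2)$. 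Applying $h(W) \geq h(W \mid \zeta')$, the change-of-variables formula for differential entropy, and orthogonal invariance of $\gamma\otimes\gamma$ to undo the rotation when integrating $\zeta'$ out, together with the analogous identities $h(X) = h(\gamma) + \EE\log\det DT(Z_1)$ and $h(Y) = h(\gamma) + \EE\log\det DS(Z_2)$, yields
\[
\dEPI(\mu,\nu) \;\geq\; \EE\bigl[\log\det(tA + (1-t)B) - t\log\det A - (1-t)\log\det B\bigr],
\]
with $A := DT(Z_1)$ and $B := DS(Z_2)$.

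I would then invoke quantitative concavity of $\log\det$ on $\{M : 0 \prec M \preceq I\}$: the Hessian of $-\log\det$ at $M$ sends a symmetric $H$ to $\tr((M^{-1}H)^2)$, and a direct calculation in an eigenbasis of $M$ gives $\tr((M^{-1}H)^2) \geq \|H\|_F^2$ whenever $M \preceq I$. Since the chord $tA+(1-t)B$ itself stays $\preceq I$ by Caffarelli, the resulting strong-concavity estimate reads $\log\det(tA + (1-t)B) - t\log\det A - (1-t)\log\det B \geq \tfrac{t(1-t)}{2}\|A-B\|_F^2$, so that $\dEPI(\mu,\nu) \geq \tfrac{t(1-t)}{2}\,\EE\|DT(Z_1) - DS(Z_2)\|_F^2$.

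Finally I would convert this matrix expectation into the three Wasserstein distances appearing in \eqref{thm1Eq}. Setting $M := \EE DT(Z_1)$ and $N := \EE DS(Z_2)$, both symmetric PSD, independence of $Z_1, Z_2$ gives the bias-variance decomposition
\[
\EE\|DT(Z_1) - DS(Z_2)\|_F^2 = \EE\|DT(Z_1) - M\|_F^2 + \EE\|DS(Z_2) - N\|_F^2 + \|M - N\|_F^2.
\]
Using $\gamma_1^{\star} = \mathcal{N}(0, M^2)$ and $\gamma_2^{\star} = \mathcal{N}(0, N^2)$ as candidates in the infimum, the suboptimal coupling $(T(Z), MZ)$ combined with the Gaussian integration-by-parts identity $\EE[Z\,T(Z)^\top] = M$ gives $W_2^2(\mu,\gamma_1^{\star}) \leq \tr\Sigma_\mu - \tr M^2$, and applying the Gaussian Poincar\'e inequality to each coordinate $\partial_j\phi$ (whose variance equals $(\Sigma_\mu)_{jj}$) upgrades this to $W_2^2(\mu,\gamma_1^{\star}) \leq \EE\|DT(Z_1) - M\|_F^2$, and likewise for $\nu$; the Bures--Wasserstein formula together with the elementary matrix bound $\tr(MN^2M)^{1/2} \geq \tr(MN)$ gives $W_2^2(\gamma_1^{\star}, \gamma_2^{\star}) \leq \|M-N\|_F^2$. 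Summing these three bounds controls the infimum from above by $\EE\|DT - DS\|_F^2$, finishing the proof. The main technical hurdle I anticipate is justifying the conditional change-of-variables rigorously: the Brenier maps $T, S$ are only Sobolev a priori, so I would either appeal to Caffarelli's regularity theory for log-concave targets or run the argument on smooth approximations and pass to the limit.
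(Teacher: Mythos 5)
Your proposal is correct and follows essentially the same route as the paper: Brenier maps with Caffarelli's contraction giving $\nabla T, \nabla S \preceq \mathrm{I}$, Rioul's conditioning/change-of-variables argument (stated as a lemma in the paper) to bound $\dEPI$ by the concavity deficit of $\log\det$, the strong-concavity estimate for $\log\det$ on $\{0 \prec M \preceq \mathrm{I}\}$, the orthogonal (bias--variance) decomposition via independence, and then Gaussian Poincar\'e together with the Bures--Wasserstein bound to produce the three $W_2$ terms. The only cosmetic differences are that you re-derive Rioul's inequality and the $\log\det$ strong concavity inline rather than citing/isolating them as lemmas, and you route the Poincar\'e step through $\tr\Sigma_\mu - \tr M^2$ rather than applying it directly to $T(z) - Mz$; these are equivalent.
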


\begin{remark}
Measures of the form $\mu = e^{-\varphi}\gamma$ for  convex $\varphi$ have several names in the literature.  Such names include `strongly log-concave densities',  `log-concave perturbation of Gaussian',  `uniformly convex potential' and `strongly convex potential' (see \cite[pp. 50-51]{saumard2014log}).  This situation also corresponds to the Bakry-\'Emery condition CD($\rho, \infty$) when the space is $\mathbb{R}^n$. %
\end{remark}

Under the assumptions of the theorem, the three terms in the RHS of \eqref{thm1Eq} explicitly give necessary conditions for the deficit $\dEPI(\mu, \nu) $ to be small.  In particular,  $\mu, \nu$ must each be quantitatively close to Gaussian measures,  which are themselves quantitatively close to one another.   Additionally, $W_2^2$ is additive on product measures, so the estimate \eqref{thm1Eq} is dimension-free, which is compatible with the additivity of $\dEPI$ on product measures. 

Theorem \ref{thm:LogConcaveStable} may be readily adapted to the setting of uniformly log-concave densities.  Toward this end, let $\eta>0$ and recall that  $h(\eta^{1/2} X) = h(X) + \frac{1}{2}\log \eta$, so that  $\dEPI(\mu, \nu)$ is invariant under the rescaling $(X,Y) \to  (\eta^{1/2} X, \eta^{1/2} Y)$. Similarly, if $X\sim \mu$ has density $f$ satisfying the Bakry-\'Emery criterion
\begin{align}
-\nabla^2 \log f \geq \eta \mathrm{I}, \label{BakryEmery}
\end{align}   
then a change of variables reveals that the density $f_{\eta}$ associated to  the rescaled random variable ${\eta}^{1/2} X$ satisfies $-\nabla^2 \log {f_\eta} \geq \mathrm{I}$.  In particular, $f_{\eta} dx = e^{-\varphi} d\gamma$ for some convex function $\varphi$.  Thus,  Theorem \ref{thm:LogConcaveStable} is equivalent to the following:
\begin{corollary}\label{cor:LogConcave}
If $\mu$ and $\nu$ are centered probability measures with densities satisfying  \eqref{BakryEmery},   then 
\begin{align*}
\dEPI(\mu, \nu)  \geq   \eta \frac{t (1 - t) }{2  } \inf_{\gamma_1,\gamma_2\in \Gamma} \Big( W_2^2(\mu,\gamma_1) + W_2^2(\nu,\gamma_2) + W_2^2(\gamma_1,\gamma_2) \Big) .
\end{align*}
\end{corollary}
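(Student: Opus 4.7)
The corollary follows from Theorem \ref{thm:LogConcaveStable} by the rescaling argument already outlined in the paragraph preceding it, so my plan is simply to execute that bookkeeping carefully.

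First, given $X \sim \mu$ and $Y \sim \nu$ with densities $f$ and $g$ satisfying the Bakry--\'Emery condition $-\nabla^2 \log f \geq \eta \mathrm{I}$ and $-\nabla^2 \log g \geq \eta \mathrm{I}$, I would pass to the rescaled random vectors $\tilde X := \eta^{1/2} X$ and $\tilde Y := \eta^{1/2} Y$ with laws $\tilde \mu$ and $\tilde \nu$. A change of variables gives $\tilde \mu = \eta^{-n/2} f(\eta^{-1/2}\cdot)\, dx$, and a direct Hessian computation confirms $-\nabla^2 \log \tilde f \geq \mathrm{I}$ (and similarly for $\tilde g$). Writing $\tilde \mu = e^{-\varphi} \gamma$ with $\varphi(x) = -\log \tilde f(x) - |x|^2/2 - (n/2)\log(2\pi)$, the Bakry--\'Emery bound becomes $\nabla^2 \varphi \geq 0$, so $\varphi$ is convex; the same holds for the analogous $\psi$ defining $\tilde \nu$. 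Moreover $\tilde\mu, \tilde\nu$ are centered since $\mu, \nu$ are, so the hypotheses of Theorem \ref{thm:LogConcaveStable} apply to the pair $(\tilde\mu, \tilde\nu)$.

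Next I would verify the two invariances needed to transfer the inequality back to $(\mu,\nu)$. For the left-hand side, $h(\eta^{1/2} Z) = h(Z) + (n/2)\log \eta$ for any $\mathbb{R}^n$-valued $Z$, so the contribution of the $\log\eta$ terms to $\dEPI(\tilde\mu,\tilde\nu)$ is $\frac{n}{2}\log\eta - t\frac{n}{2}\log\eta - (1-t)\frac{n}{2}\log\eta = 0$, giving $\dEPI(\tilde\mu,\tilde\nu) = \dEPI(\mu,\nu)$. For the right-hand side, the map $\gamma_i \mapsto \tilde\gamma_i$ obtained by pushing forward under $x\mapsto \eta^{1/2} x$ is a bijection of the set $\Gamma$ of centered Gaussians onto itself (since covariance scales by $\eta$), and the quadratic Wasserstein distance scales as $W_2^2(\tilde\mu,\tilde\gamma_1) = \eta\, W_2^2(\mu,\gamma_1)$ by applying the map $x \mapsto \eta^{1/2} x$ to any optimal coupling. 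The same identity holds for the other two distances appearing in the infimum, so the infimum over $(\tilde\gamma_1,\tilde\gamma_2) \in \Gamma \times \Gamma$ equals $\eta$ times the infimum over $(\gamma_1,\gamma_2) \in \Gamma \times \Gamma$.

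Combining these two steps, Theorem \ref{thm:LogConcaveStable} applied to $(\tilde\mu,\tilde\nu)$ yields exactly the stated inequality with the extra factor $\eta$, and the proof is complete. I do not anticipate any real obstacle: the argument is a straightforward scaling, and the only things to check are the two invariances above, both of which are one-line calculations.
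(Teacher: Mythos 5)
Your argument is correct and matches the paper's own derivation exactly: the paper sketches the same rescaling $(X,Y)\mapsto(\eta^{1/2}X,\eta^{1/2}Y)$ in the paragraph preceding the corollary, observing that $\dEPI$ is scale-invariant, that the rescaled densities become log-concave perturbations of $\gamma$, and then asserting the equivalence with Theorem \ref{thm:LogConcaveStable}. You have simply carried out the bookkeeping (the Hessian computation, the $W_2^2$ scaling by $\eta$, and the bijection of $\Gamma$ onto itself) that the paper leaves implicit.
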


This result will also apply to certain families of non log-concave measures, see Remark \ref{remark_nonconvex}. 

For convenience, let $\mathsf{d}^2_{W_2}(\mu):= \inf_{\gamma_0\in \Gamma}W_2^2(\mu,\gamma_0)$ denote the squared $W_2$-distance from $\mu$ to the set of centered Gaussian measures.  Using the inequality $(a + b + c)^2 \leq 3 (a^2+b^2+c^2)$ and the triangle inequality for $W_2$, we may conclude a weaker, but potentially more convenient variant of Corollary \ref{cor:LogConcave}.
\begin{corollary}\label{cor:LogConcave2}
If $\mu$ and $\nu$ are centered probability measures with densities satisfying  \eqref{BakryEmery},   then 
\begin{align}
\dEPI(\mu, \nu)  \geq   \eta \frac{t (1 - t) }{8  }  \Big(\mathsf{d}^2_{W_2}(\mu) + \mathsf{d}^2_{W_2}(\nu) + W_2^2(\mu,\nu) \Big) .\label{cor2Ineq}
\end{align}
\end{corollary}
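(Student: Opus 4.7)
The plan is to derive Corollary \ref{cor:LogConcave2} as an immediate algebraic consequence of Corollary \ref{cor:LogConcave}, exactly along the lines hinted at in the text. So the strategy is: start with arbitrary centered Gaussian measures $\gamma_1,\gamma_2\in\Gamma$, bound each of the three terms on the RHS of \eqref{cor2Ineq} by a multiple of the three corresponding terms that appear inside the infimum of Corollary \ref{cor:LogConcave}, and then take the infimum.

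For the first two terms, the bounds are immediate from the definition of $\mathsf{d}^2_{W_2}$: for any $\gamma_1,\gamma_2\in\Gamma$,
\begin{align*}
\mathsf{d}^2_{W_2}(\mu) \leq W_2^2(\mu,\gamma_1), \qquad \mathsf{d}^2_{W_2}(\nu) \leq W_2^2(\nu,\gamma_2).
\end{align*}
The only real work is to control $W_2^2(\mu,\nu)$. For this I would use the triangle inequality for $W_2$ to write $W_2(\mu,\nu) \leq W_2(\mu,\gamma_1) + W_2(\gamma_1,\gamma_2) + W_2(\gamma_2,\nu)$, and then square and apply $(a+b+c)^2 \leq 3(a^2+b^2+c^2)$ to obtain
\begin{align*}
W_2^2(\mu,\nu) \leq 3\bigl(W_2^2(\mu,\gamma_1) + W_2^2(\gamma_1,\gamma_2) + W_2^2(\gamma_2,\nu)\bigr).
\end{align*}

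Adding the three bounds, the coefficients on $W_2^2(\mu,\gamma_1)$, $W_2^2(\nu,\gamma_2)$ and $W_2^2(\gamma_1,\gamma_2)$ become $4$, $4$, and $3$ respectively, all dominated by $4$. Hence
\begin{align*}
\mathsf{d}^2_{W_2}(\mu) + \mathsf{d}^2_{W_2}(\nu) + W_2^2(\mu,\nu) \leq 4\bigl(W_2^2(\mu,\gamma_1) + W_2^2(\nu,\gamma_2) + W_2^2(\gamma_1,\gamma_2)\bigr).
\end{align*}
Since this holds for every $\gamma_1,\gamma_2\in\Gamma$, taking the infimum on the right and substituting into Corollary \ref{cor:LogConcave} yields exactly \eqref{cor2Ineq}, with the factor $\tfrac{1}{2}$ becoming $\tfrac{1}{8}$ as required. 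There is no real obstacle here; the argument is a short chain of triangle inequalities and the elementary inequality $(a+b+c)^2 \leq 3(a^2+b^2+c^2)$, exactly as advertised in the paragraph preceding the corollary.
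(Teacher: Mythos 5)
Your argument is correct and matches the approach the paper itself indicates in the sentence preceding Corollary~\ref{cor:LogConcave2}: bound $W_2^2(\mu,\nu)$ via the triangle inequality and $(a+b+c)^2 \le 3(a^2+b^2+c^2)$, bound $\mathsf{d}^2_{W_2}(\mu)$ and $\mathsf{d}^2_{W_2}(\nu)$ by their defining infima, sum to get the constant $4$, and divide through Corollary~\ref{cor:LogConcave}. The coefficient accounting ($4,4,3$ all dominated by $4$, giving $\tfrac{1}{2}\cdot\tfrac{1}{4}=\tfrac{1}{8}$) is exactly right.
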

  
Shannon's form  of the entropy power inequality \eqref{ShannonEPI}  is oftentimes preferred to  Lieb's inequality for  applications in information theory.   Starting with Corollary \ref{cor:LogConcave2}, we may establish an analogous estimate for Shannon's EPI.    %
\begin{corollary} \label{cor:entropyPower}
Let $\mu$  and $\nu$ be centered probability measures on $\mathbb{R}^n$ satisfying   \eqref{BakryEmery} with parameters $\eta_{\mu}$ and $\eta_{\nu}$, respectively.  Then, 
\begin{align}
N(\mu*\nu) \geq \left(N(\mu)+N(\nu)\right) \Delta_{\mathsf{EPI}}(\mu,\nu),\notag
\end{align}
where 
\begin{align}
\Delta_{\mathsf{EPI}}(\mu,\nu) := \exp\left( \frac{\min\{\theta \eta_{\mu}, (1-\theta) \eta_{\nu}\}   }{4 n} \Big( (1-\theta) \mathsf{d}^2_{W_2}(\mu)+\theta \mathsf{d}^2_{W_2}(\nu)  + \mathsf{d}^2_F(\mu,\nu)  \Big) \right) ,\label{EPIdef}
\end{align}
and $\theta$ is chosen to satisfy $\theta/(1-\theta) = N(\mu)/N(\nu)$.
\end{corollary}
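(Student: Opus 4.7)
The plan is to reduce the corollary to Corollary \ref{cor:LogConcave2} via a rescaling that moves between Shannon's and Lieb's forms of the EPI.

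First, introduce $\tilde X := X/\sqrt\theta$ and $\tilde Y := Y/\sqrt{1-\theta}$, so that $X+Y = \sqrt\theta\,\tilde X + \sqrt{1-\theta}\,\tilde Y$ matches the convex combination appearing in Lieb's inequality at $t=\theta$. A direct computation on $\log f_{\tilde\mu}(y) = \tfrac{n}{2}\log\theta + \log f_\mu(\sqrt\theta\,y)$ shows that $\tilde\mu$ satisfies \eqref{BakryEmery} with parameter $\theta\eta_\mu$ and $\tilde\nu$ with parameter $(1-\theta)\eta_\nu$; hence both satisfy \eqref{BakryEmery} with the common constant $\tilde\eta := \min\{\theta\eta_\mu,(1-\theta)\eta_\nu\}$. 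Applying Corollary \ref{cor:LogConcave2} to $(\tilde\mu,\tilde\nu)$ at $t=\theta$ with $\eta=\tilde\eta$ yields
\[
\delta_{\mathsf{EPI},\theta}(\tilde\mu,\tilde\nu) \;\geq\; \tilde\eta\,\frac{\theta(1-\theta)}{8}\Bigl[\mathsf{d}^2_{W_2}(\tilde\mu)+\mathsf{d}^2_{W_2}(\tilde\nu)+W_2^2(\tilde\mu,\tilde\nu)\Bigr].
\]

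Next, I translate from Lieb's to Shannon's form. Using the entropy scaling $h(X/\sqrt\theta) = h(X)-\tfrac n2\log\theta$, unpack
\[
\delta_{\mathsf{EPI},\theta}(\tilde\mu,\tilde\nu) = h(X+Y) - \theta h(X) - (1-\theta)h(Y) + \tfrac n2\bigl[\theta\log\theta + (1-\theta)\log(1-\theta)\bigr].
\]
Multiplying through by $2/n$ and exponentiating, the specific choice $\theta/(1-\theta)=N(\mu)/N(\nu)$ collapses the prefactor $\theta^{-\theta}(1-\theta)^{-(1-\theta)}\,N(\mu)^\theta\,N(\nu)^{1-\theta}$ to $N(\mu)+N(\nu)$, producing the factor in front of $\Delta_{\mathsf{EPI}}$.

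It remains to convert the Wasserstein quantities in the exponent back to $(\mu,\nu)$. The scaling of $W_2$ under dilation gives $\mathsf{d}^2_{W_2}(\tilde\mu) = \mathsf{d}^2_{W_2}(\mu)/\theta$ and $\mathsf{d}^2_{W_2}(\tilde\nu) = \mathsf{d}^2_{W_2}(\nu)/(1-\theta)$, so after multiplication by the overall $\theta(1-\theta)$ these reproduce the $(1-\theta)\mathsf{d}^2_{W_2}(\mu)$ and $\theta\mathsf{d}^2_{W_2}(\nu)$ contributions to $\Delta_{\mathsf{EPI}}$. The main obstacle is the cross term: I must show $\theta(1-\theta)\,W_2^2(\tilde\mu,\tilde\nu) \geq \mathsf{d}^2_F(\mu,\nu)$, converting a transport cost between the rescaled measures into the Frobenius-type covariance distance of the original measures. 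My plan for this step is to first invoke Gelbrich's inequality, $W_2^2(\tilde\mu,\tilde\nu) \geq W_2^2(\gamma_{\tilde\mu},\gamma_{\tilde\nu})$, which reduces the problem to a Bures--Wasserstein distance between the centered Gaussians with covariances $\Sigma_\mu/\theta$ and $\Sigma_\nu/(1-\theta)$, and then to exploit the Procrustes identity $W_2^2(\gamma_1,\gamma_2)=\inf_{R\text{ orthogonal}}\|\Sigma_1^{1/2}R - \Sigma_2^{1/2}\|_F^2$, choosing the inner parameter $\theta'\in(0,1)$ in the definition of $\mathsf{d}^2_F$ so that the $1/\theta,\,1/(1-\theta)$ covariance scalings are compensated by the $\sqrt{\theta'},\,\sqrt{1-\theta'}$ factors. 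Reconciling the non-commuting trace $\mathrm{tr}\bigl((\Sigma_\mu^{1/2}\Sigma_\nu\Sigma_\mu^{1/2})^{1/2}\bigr)$ arising from Bures with the simpler Frobenius pairing $\mathrm{tr}(\Sigma_\mu^{1/2}\Sigma_\nu^{1/2})$ entering $\mathsf{d}^2_F$ is the technically most delicate piece of the argument.
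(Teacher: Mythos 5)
Your reduction mirrors the paper's own proof of Corollary~\ref{cor:entropyPower} step by step: the rescaling $\tilde X = X/\sqrt\theta$, $\tilde Y = Y/\sqrt{1-\theta}$ with $\theta/(1-\theta)=N(\mu)/N(\nu)$; the observation that $\tilde\mu,\tilde\nu$ satisfy \eqref{BakryEmery} with parameters $\theta\eta_\mu$ and $(1-\theta)\eta_\nu$, hence both with $\tilde\eta:=\min\{\theta\eta_\mu,(1-\theta)\eta_\nu\}$; the application of Corollary~\ref{cor:LogConcave2} at $t=\theta$; the exponentiation back to Shannon form; and the scalings $\theta(1-\theta)\mathsf{d}^2_{W_2}(\tilde\mu)=(1-\theta)\mathsf{d}^2_{W_2}(\mu)$, etc. All of that is fine and matches the paper.

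The genuine issue is the cross term, and you are right to single it out, but your proposed route does not close it and I want to be precise about why. You need
\[
\theta(1-\theta)\,W_2^2(\tilde\mu,\tilde\nu)\;\geq\;\mathsf{d}^2_F(\mu,\nu).
\]
Gelbrich gives $W_2^2(\tilde\mu,\tilde\nu)\geq W_2^2(\gamma_{\tilde\mu},\gamma_{\tilde\nu})$, which is the right direction. But the second step is not: the Bures/Procrustes representation is
\[
W_2^2(\gamma_1,\gamma_2)=\mathrm{tr}\,\Sigma_1+\mathrm{tr}\,\Sigma_2-2\,\big\|\Sigma_1^{1/2}\Sigma_2^{1/2}\big\|_*\;=\;\inf_{O\in O(n)}\big\|\Sigma_1^{1/2}O-\Sigma_2^{1/2}\big\|_F^2\;\leq\;\big\|\Sigma_1^{1/2}-\Sigma_2^{1/2}\big\|_F^2,
\]
because the trace norm $\|\Sigma_1^{1/2}\Sigma_2^{1/2}\|_*$ is always $\geq \mathrm{tr}(\Sigma_1^{1/2}\Sigma_2^{1/2})$, with equality iff the covariances commute. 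So Gelbrich followed by Bures/Procrustes lower-bounds $\theta(1-\theta)W_2^2(\tilde\mu,\tilde\nu)$ by $(1-\theta)\mathrm{tr}\Sigma_\mu+\theta\mathrm{tr}\Sigma_\nu-2\sqrt{\theta(1-\theta)}\,\|\Sigma_\mu^{1/2}\Sigma_\nu^{1/2}\|_*$, which is \emph{smaller} than the corresponding Frobenius expression entering $\mathsf{d}_F^2$; the inequality you want does not follow. The ``reconciliation'' you flag as delicate is in fact an obstruction, not a technicality, and there is no orthogonal-matrix or $\theta'$ choice that fixes the sign. A small $n=2$ check makes this concrete: with $\Sigma_\mu=\mathrm{diag}(1,\epsilon)$ and $\Sigma_\nu$ the rotation of $\Sigma_\mu$ by $\pi/4$ (so $\theta=1/2$), one finds for small $\epsilon$ that $\theta(1-\theta)W_2^2(\gamma_{\tilde\mu},\gamma_{\tilde\nu})\approx 1-1/\sqrt2$ while $\mathsf{d}_F^2(\mu,\nu)\approx 1/2$, so the claimed intermediate inequality fails. (The paper's own proof sketch relies on ``the identity $W_2^2(\gamma_1,\gamma_2)=\|\Sigma_{\gamma_1}^{1/2}-\Sigma_{\gamma_2}^{1/2}\|_F^2$'', which is only an identity when the covariances commute; in the proof of Theorem~\ref{thm:LogConcaveStable} the needed direction is $\|\cdot\|_F^2\geq W_2^2$, which is always true, but here one needs the reverse.) To finish your argument you would need a genuinely different handle on the cross term --- for instance, bounding $W_2^2(\tilde\mu,\tilde\nu)$ below by the $W_2$-distance to a common Gaussian with a covariance proportional to one of $\Sigma_\mu,\Sigma_\nu$, or replacing $\mathsf{d}_F^2$ in the target by the Bures-type quantity with the trace norm, which would make the chain close.
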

\begin{remark}
Equality is attained in \eqref{ShannonEPI} if and only if $\mu, \nu$ are Gaussian with proportional covariances.  Under the stated assumptions of log-concavity, these  conditions are explicitly captured by the last three  terms in \eqref{EPIdef}.
  \end{remark}
	
We also derive a  stability estimate when one variable is simply log-concave and the other variable is Gaussian, involving the $L^1$-Wasserstein distance: 
\begin{align*}
W_1(\mu,\nu) = \inf \EE [| X-Y  | ], 
\end{align*}
where $| \cdot |$ denotes the $L^2$ metric on $\R^n$  and the infimum is over all couplings on $X,Y$ with marginal laws $X\sim \mu$ and $Y\sim \nu$.  

\begin{theorem} \label{thm_w1}
For any log-concave centered random variable with law $\mu$, we have
$$\dEPI(\mu, \gamma) \geq C \, t(1-t) \min(W^2_1(\mu, \gamma), 1),$$
with $C$ a numerical constant that does not depend on $\mu$. 
\end{theorem}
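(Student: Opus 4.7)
My plan is to adapt the optimal-transport strategy underlying Theorem \ref{thm:LogConcaveStable} to the merely log-concave setting, expecting a degradation from $W_2$ to $W_1$ and the saturation cap $\min\{\cdot, 1\}$ to emerge naturally. Let $T = \nabla\varphi$ be the Brenier map pushing $\gamma$ forward to $\mu$. Couple $X = T(G)$ with $G \sim \gamma$, introduce an independent $G' \sim \gamma$, and set $Z = \sqrt{t}\, T(G) + \sqrt{1-t}\, G'$ and $W = -\sqrt{1-t}\, G + \sqrt{t}\, G'$. Since $W \sim \gamma$ and the change of variables $(G, G') \mapsto (Z, W)$ has Jacobian determinant $\det(t\, DT(G) + (1-t) I)$, combining Monge--Amp\`ere for $h(Z,W) = 2 h(\gamma) + \EE\log\det(t\, DT(G) + (1-t)I)$ with the sub-additivity $h(Z) \geq h(Z, W) - h(W)$ yields
\begin{equation*}
\dEPI(\mu, \gamma) \;\geq\; \EE\sum_{i=1}^n f_t(\lambda_i(DT(G))),
\qquad f_t(\lambda) := \log(t\lambda + 1 - t) - t\log\lambda,
\end{equation*}
where the $\lambda_i$ are the eigenvalues of the symmetric positive matrix $DT(G) = \nabla^2\varphi(G)$.

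I would then establish the universal pointwise inequality $f_t(\lambda) \geq c\, t(1-t)\,\min\{(\lambda - 1)^2, 1\}$ for some absolute $c > 0$. This is elementary calculus: the Taylor expansion near $\lambda = 1$ uses $f_t(1) = 0$, $f_t'(1) = 0$, $f_t''(1) = t(1-t)$; the divergence $f_t(\lambda) \to \infty$ as $\lambda \to 0^+$ handles small eigenvalues; and monotonicity of $f_t$ on $[1, \infty)$, together with the asymptotic $f_t(\lambda) \sim (1-t)\log\lambda$, handles large $\lambda$. Combining this with the elementary fact $\sum_i \min(a_i, 1) \geq \min(\sum_i a_i, 1)$ delivers
\begin{equation*}
\dEPI(\mu, \gamma) \;\geq\; c\, t(1-t)\, \EE \min\{\|DT(G) - I\|_F^2, 1\}.
\end{equation*}

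The final step is to lower bound this truncated expectation by $\min\{W_1^2(\mu, \gamma), 1\}$, and I expect it to be the main obstacle. The natural tools are the Gaussian Poincar\'e inequality, which gives $\EE|T(G) - G|^2 \leq \EE\|DT(G) - I\|_F^2$ because $T(G) - G$ has mean zero (both $\mu$ and $\gamma$ are centered), together with the Kantorovich bound $W_1(\mu, \gamma) \leq \EE|T(G) - G|$. A case split on whether $W_1(\mu, \gamma) \lesssim 1$ should close the argument: in the small regime the truncation is inactive on a high-probability set and Poincar\'e converts Frobenius control directly into a $W_1^2$ bound; in the large regime it suffices to show the right-hand side is bounded below by a universal constant, which should follow from the concentration properties of log-concave distributions applied to the Brenier displacement $T(G) - G$. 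The subtlety is that the naive inequality $\EE[\min(Z, 1)] \geq c\,\min(\EE Z, 1)$ is false in general (Jensen goes the wrong way for the concave function $\min(\cdot, 1)$), so the argument must genuinely use log-concavity to control the tails of $\|DT - I\|_F$.
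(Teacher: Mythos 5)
Your opening steps mirror the paper's exactly: both use Rioul's transport inequality \eqref{RioulDeficit} specialized to $T_2 = \mathrm{id}$ to get $\dEPI(\mu,\gamma) \geq \EE \sum_i f_t(\lambda_i)$ with $\lambda_i$ the eigenvalues of the Brenier Jacobian $\nabla T$, and both then quantify the strict concavity of the log-determinant. But where the paper applies Lemma~\ref{lem:FrobNormIneq} to get the \emph{weighted} lower bound $\frac{t(1-t)}{2}\EE\bigl[\|\nabla T - \mathrm{I}\|_F^2 / (1 + \lambda_{\mathrm{max}}^2)\bigr]$, you instead propose the pointwise scalar truncation $f_t(\lambda) \geq c\,t(1-t)\min\{(\lambda-1)^2,1\}$. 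That pointwise inequality looks provable, but it throws away the $\lambda_{\mathrm{max}}$ dependence precisely where the argument later needs it.

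The genuine gap is at what you flag as the ``main obstacle,'' and your proposed fix does not work as stated. You need to bound $\EE\min\{\|\nabla T(G) - \mathrm{I}\|_F^2, 1\}$ from below, and you correctly note this requires tail control on $\|\nabla T(G) - \mathrm{I}\|_F$; but you propose to extract it from ``concentration properties of log-concave distributions applied to the Brenier displacement $T(G)-G$.'' That is the wrong object: concentration of the displacement $T(G)-G$ tells you nothing about the tails of the derivative $\|\nabla T(G) - \mathrm{I}\|_F$, and no generic concentration estimate for log-concave measures controls the latter. The paper's resolution is structurally different. It \emph{keeps} the weight $1+\lambda_{\mathrm{max}}^2$ and splits it off by Cauchy--Schwarz,
\begin{equation*}
\EE\|\nabla T - \mathrm{I}\|_F \;\leq\; \sqrt{\EE[1 + \lambda_{\mathrm{max}}^2]}\,\sqrt{\EE\Bigl[\tfrac{\|\nabla T - \mathrm{I}\|_F^2}{1 + \lambda_{\mathrm{max}}^2}\Bigr]},
\end{equation*}
then controls the first factor via Kolesnikov's reverse-H\"older inequality $\EE[\lambda_{\mathrm{max}}^2] \leq \tfrac{3}{2}(\EE[\lambda_{\mathrm{max}}])^2$ (valid because $\nabla T$ is the Hessian of a Brenier potential mapping $\gamma$ to a log-concave target), plus $\EE\lambda_{\mathrm{max}} \leq 1 + \EE\|\nabla T - \mathrm{I}\|_F$. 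This yields a self-bounding inequality for $r := \EE\|\nabla T - \mathrm{I}\|_F$ which, via $r/\sqrt{4+3r^2} \geq c\min(r,1)$ and the $L^1$ Gaussian Poincar\'e bound $W_1(\mu,\gamma) \leq 2r$, produces exactly the $\min\{W_1^2,1\}$ cap. Once you have passed to $\EE\min\{\|\nabla T - \mathrm{I}\|_F^2,1\}$ you have erased the $\lambda_{\mathrm{max}}$ information needed to run this Cauchy--Schwarz/Kolesnikov mechanism, so to close your version of the argument you would need an independent moment or tail bound on $\|\nabla T - \mathrm{I}\|_F$ itself --- a stronger and, as far as I can tell, unavailable ingredient. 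The missing named tool is Kolesnikov's theorem, and the missing structural idea is to run Cauchy--Schwarz \emph{before} truncating rather than truncating pointwise.
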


This estimate is reminiscent of the deficit estimates on Talagrand's inequality of \cite{fathi2014quantitative, cordero2015quantitative}, with a remainder term that stays bounded when the distance becomes large. The advantage of this estimate is that it does not rely on any quantitative information on $\mu$, only on the fact that it is log-concave. 

\subsection{Relation to Prior Work} \label{subs:prior}

As remarked above, a few stability estimates are known for the EPI.  Here, we review those that are most relevant and  comment on the relationship to our results.  To begin,  we mention a  stability result due to Toscani \cite{toscani2015strengthened}, which asserts for probability measures $\mu,\nu$ with log-concave densities, there is a function $R$ such that
\begin{align}
N(\mu*\nu) \geq \left(N(\mu)+N(\nu)\right) R(\mu,\nu),\notag
\end{align}
where $R(\mu,\nu)=1$   only if $\mu,\nu$ are Gaussian measures.   However, the function $R(\mu,\nu)$ is quite complicated\footnote{$R(\mu,\nu)$ is expressed in terms of integrals of nonlinear functionals evaluated along the evolutes of $\mu,\nu$ under the heat semigroup.}, and does not explicitly control the distance  of $\mu,\nu$ to the space of Gaussian measures.  Toscani leaves this as an open problem  \cite[Remark 7]{toscani2015strengthened}.   Corollary  \ref{cor:entropyPower}   provides a satisfactory answer to his problem when  $\mu,\nu$  satisfy the Bakry-\'Emery criterion \eqref{BakryEmery} for some parameter $\eta>0$.   Similarly, Theorem \ref{thm_w1}  provides an answer when  one of the measures is log-concave and the other Gaussian.

Next,  we compare  Corollary \ref{cor:LogConcave}   to the main result of Ball and Nguyen \cite{ball2012entropy}, which states that  if $\mu$ is a centered  isotropic probability measure (i.e., $\Sigma_{\mu} = \mathrm{I}$) with spectral gap $\lambda$ and  log-concave density (not necessarily uniformly), then 
\begin{align}
\delta_{{\sf EPI},1/2}(\mu,\mu) \geq \frac{\lambda}{4(1 + \lambda)} D(\mu\|\gamma) \geq \frac{\lambda}{8(1 + \lambda)} W_2^2(\mu,\gamma), \label{BallResult}
\end{align}
where $D(\mu\|\gamma) = \int d\mu \log\frac{d \mu}{d \gamma}$ is the relative entropy between $\mu$ and $\gamma$, and the second inequality is due to Talagrand.  Now,  if $\mu$  satisfies the Bakry-\'Emery criterion \eqref{BakryEmery},    then  Corollary \ref{cor:LogConcave} yields the  similar bound 
\begin{align}
\delta_{{\sf EPI},1/2}(\mu,\mu) \geq  \frac{\eta }{4} \inf_{\gamma_0\in \Gamma}W_2^2(\mu,\gamma_0).\notag
\end{align}
Given the similarity, Corollary \ref{cor:LogConcave} may be viewed as an extension of \eqref{BallResult} to non-identical measures and all parameters $t \in (0,1)$. However, two points should be mentioned: (i) a stability  estimate with respect to $W_2$  is weaker than one involving relative entropy; and (ii) the Bakry-\'Emery criterion for $\eta>0$ implies a positive spectral gap, but not vice versa.  It is  interesting to ask   whether the hypothesis of Corollary \ref{cor:LogConcave} can be weakened to  require only a  spectral gap;  the result of Ball and Nguyen and a similar earlier result by Ball, Barthe and Naor \cite{ball2003entropy} in dimension one provides  some grounds for cautious optimism. In Section 4, we shall obtain a one-dimensional result for non log-concave measures under a stronger assumption than \cite{ball2003entropy} (namely a Cheeger isoperimetric inequality), but with the advantage of being valid for non-identical measures. 

The two results mentioned above  assume log-concave densities, as do we.  In contrast, the refined EPI established in \cite{courtadeStrengtheningISIT2016}  provides a qualitative stability estimate for the EPI   when $\mu$ is arbitrary and $\nu$ is Gaussian.  However, the deficit is quantified in terms of the so-called \emph{strong data processing function}, and is therefore not directly comparable to the present results.  Nevertheless, a noteworthy consequence is a \emph{reverse} entropy power inequality, which does bear some resemblance to the result of Corollary \ref{cor:entropyPower}.  In particular, for arbitrary probability measures $\mu,\nu$ on $\mathbb{R}^n$ with finite second moments, it was shown in \cite{courtade2016links} that
\begin{align}
N(\mu* \nu)\leq \left(N(\mu) + N(\nu) \right)\left( (1-\theta)\mathsf{p}(\mu) + \theta \mathsf{p}(\nu)  \right),\label{REPI}
\end{align}
where $\theta$ is the same as in the definition of $\Delta_{\mathsf{EPI}}(\mu,\nu)$ and  $\mathsf{p}(\mu) := \frac{1}{n}N(\mu)J(\mu)\geq 1$ is the \emph{Stam defect}, with $J(\mu)$ denoting Fisher information.   We have $\mathsf{p}(\mu) = 1$ only if $\mu$ is Gaussian, and thus $\mathsf{p}(\mu)$ may  reasonably be interpreted as a measure of how far $\mu$ is from the set of Gaussian measures.  Thus, the deficit term $(1-\theta)\mathsf{p}(\mu) + \theta \mathsf{p}(\nu)$ in \eqref{REPI} bears a pleasant resemblance to the deficit term  $(1-\theta) \mathsf{d}^2_{W_2}(\mu)+\theta \mathsf{d}^2_{W_2}(\nu)$ in Corollary \ref{cor:entropyPower}.  Importantly, though,  the former upper is an bound on $N(\mu*\nu)$, while the latter yields a lower bound. 

We would be remiss to not mention that the inequality $\mathsf{p}(\mu)  \geq 1$ mentioned above  is known as {Stam's inequality}, and is equivalent to Gross' celebrated logarithmic Sobolev inequality for Gaussian measure \cite{gross1975logarithmic, carlen1991superadditivity}.  Taking $\mu = \nu$ in \eqref{REPI} gives the sharpening $\mathsf{p}(\mu)  \geq \exp\left(\frac{2}{n} \dEPI(\mu,\mu)  \right)$, holding for any probability measure $\mu$ with finite second moment. Equivalently,  if $\mu$ is centered, then 
\begin{align}
\frac{1}{2} I(\mu \| \gamma) \geq D(\mu \| \gamma) + \dEPI(\mu, \mu), \notag
\end{align}
where $I(\mu \| \gamma)$ denotes the relative Fisher information of $\mu$ with respect to $\gamma$.  When $\mu$ satisfies the Bakry-\'Emery criterion \eqref{BakryEmery}, we have a dimension-free quantitative stability result for the logarithmic Sobolev inequality $\tfrac{1}{2} I(\mu \| \gamma) \geq D(\mu \| \gamma)$.  This is an improvement upon the main result of Indrei and Marcon \cite{indrei2013quantitative}, who consider the subset of densities satisfying \eqref{BakryEmery} for some parameter $\eta>0$, whose Hessians are also uniformly upper bounded.    Unfortunately, this improvement is already obsolete, as Fathi, Indrei and Ledoux \cite{fathi2014quantitative} have recently shown that a similar result holds for all probability measures with positive spectral gap.  Interestingly though, \eqref{REPI} does imply a general upper bound on $\dEPI(\mu, \nu)$ involving Fisher informations.  Specifically, for arbitrary probability measures $\mu,\nu$ with finite second moments,
\begin{align}
\dEPI(\mu, \nu) \leq (1-t)\left( \frac{1}{2} I(\mu \| \gamma) - D(\mu \| \gamma) \right) + t\left( \frac{1}{2} I(\nu \| \gamma) - D(\nu \| \gamma) \right). \notag
\end{align}
See \cite{courtade2016links} for details. Other deficit estimates for the logarithmic Sobolev inequality have been obtained in \cite{bobkov2014lsi, bolley2015lsi}. 

If $X\sim \mu$ is a radially symmetric random vector on $\mathbb{R}^n$, $n\geq 2$, satisfying modest regularity conditions (e.g,. convolution with a Gaussian measure of small variance is sufficient), then it was recently established in \cite{courtade2016entropy} that, for any $\varepsilon>0$
\begin{align}
\delta_{{\sf EPI},1/2}(\mu,\mu) \geq  C_{\varepsilon}(\mu) n^{\varepsilon} D^{1+\varepsilon}(\mu\| \gamma_{\mu}), \label{radial}
\end{align}
where $\gamma_{\mu}$ denotes the Gaussian measure with the same covariance as $\mu$, and $C_{\varepsilon}(\mu)$ is an explicit function that depends only on $\varepsilon$, a finite number of moments of $\mu$, and its regularity.   This closely parallels quantitative estimates on entropy production in the Boltzmann equation \cite{toscani1999sharp, villani2003cercignani}. Neither \eqref{thm1Eq} nor \eqref{radial}   imply the other since the hypotheses required are quite different (strong log-concavity vs. radial symmetry).  However, both results do give quantitative bounds on entropy production under convolution in terms of a distance from  Gaussian measures.  In general, the constants in \eqref{thm1Eq} will be much better than those in \eqref{radial} which, although numerical, can be quite small.  We return to the setting of radially symmetric measures in Section \ref{sec:extensions}.

Finally, we mention Carlen and Soffer's  qualitative stability estimate for the EPI that holds under general conditions \cite{carlen1991entropy}.  Roughly speaking, their result is the following:  if probability measures $\mu,\nu$ on $\mathbb{R}^n$ are isotropic with  Fisher informations upper bounded by $J_0$, then there is a nonlinear function $\Theta : \mathbb{R}\to [0,\infty)$, strictly increasing from $0$, depending only on the dimension $n$, the parameter $t$, the number $J_0$ and smoothness and decay properties of    $\mu,\nu$ that satisfies
\begin{align}
\dEPI(\mu,\nu)\geq \Theta(D(\mu\| \gamma)).\notag
\end{align}
The construction of the function $\Theta$ relies on a compactness argument, and therefore is non-explicit.  As such, it is again not directly comparable to our results.  However, it does settle cases of equality. 

\subsection{Instability of the EPI: An Example} \label{subs:unstable}

As a counterpoint to Theorem \ref{thm:LogConcaveStable} and to provide justification for the regularity assumptions therein, we observe that there are probability measures that satisfy the hypotheses required in Theorem \ref{thm:LogConcaveStable} on sets of measure arbitrarily close to one, but severely violate its conclusion. 
\begin{proposition}
There is a sequence of probability measures  $(\mu_{\epsilon})_{\epsilon>0}$  on $\mathbb{R}$ with finite and uniformly bounded  entropies and second moments such that  
\begin{enumerate}
\item The measures $\mu_{\epsilon}$ essentially satisfy the Bakry-\'Emery criterion \eqref{BakryEmery} for $\eta=1$.  That is, $\lim_{\epsilon\downarrow 0} \mu_{\epsilon}(\Omega_{\epsilon})  = 1$,  where $\Omega_{\epsilon} := \{ x \mid - \frac{d^2}{dx^2} \log f_{\epsilon}(x) \geq 1\}$ with $d \mu_{\epsilon} = f_{\epsilon}d x$.
\item The measures $\mu_{\epsilon}$  saturate the EPI as $\epsilon$ approaches zero.  That is,  $\lim_{\epsilon\downarrow 0}  \dEPI(\mu_{\epsilon}, \mu_{\epsilon}) = 0$ for all $t\in(0,1)$.
\item The measures $\mu_{\epsilon}$  are bounded away from Gaussians in the $W_2$ metric; specifically,  $\liminf_{\epsilon\downarrow 0}   \inf_{\gamma_0 \in \Gamma } W^2_2(\mu_{\epsilon} , \gamma_{0})> 1/3$. 
 \end{enumerate}
\end{proposition}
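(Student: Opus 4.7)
The plan is to build $\mu_\epsilon$ as a centered two-component Gaussian mixture whose small component carries mass $\epsilon$ and is placed very far from the origin. Concretely, I would take
\[
\mu_\epsilon := (1-\epsilon)\, \mathcal{N}(-\epsilon d_\epsilon,\, \tfrac12) \;+\; \epsilon\, \mathcal{N}((1-\epsilon)d_\epsilon,\, \tfrac12),
\]
with $d_\epsilon := \sqrt{C/\epsilon}$ for a constant $C > 16/3$ to be fixed at the end. The two components have variance $1/2$, giving a background Bakry--\'Emery constant $\eta = 2 > 1$, while the constraint $\epsilon d_\epsilon^2 = C$ keeps both the variance $\tfrac12 + \epsilon(1-\epsilon) d_\epsilon^2$ and the entropy $h(\mu_\epsilon) = h(\mathcal{N}(0,1/2)) + H_2(\epsilon) + o(1)$ uniformly bounded as $\epsilon \to 0$ (the latter because the two bumps separate; $H_2$ denotes binary entropy).

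Claim (1) follows from a direct computation: writing $f_\epsilon = g_1 + g_2$ and $r := g_2/g_1$, one has $-(\log f_\epsilon)'' = 2 - 4 d_\epsilon^2\, r/(1+r)^2$, so the condition $-(\log f_\epsilon)'' \geq 1$ fails only on the set $\{r/(1+r)^2 > 1/(4 d_\epsilon^2)\}$. Since $r/(1+r)^2$ attains its maximum $1/4$ at $r = 1$ and $\log r$ is linear in $x$ with slope $2 d_\epsilon$, this bad set is an interval of length $O((\log d_\epsilon)/d_\epsilon)$ around the crossover point; on that interval $f_\epsilon$ is of order $e^{-d_\epsilon^2/4}$, which forces $\mu_\epsilon(\Omega_\epsilon^c) \to 0$. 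For claim (2), the variable $\sqrt{t} X + \sqrt{1-t}\, Y$ is a mixture of at most four Gaussians of variance $1/2$ at pairwise well-separated locations, with weights $(1-\epsilon)^2,\epsilon(1-\epsilon),\epsilon(1-\epsilon),\epsilon^2$. Applying the decomposition $h(X) = h(X\mid J) + I(X;J)$, and using that $I(X;J) \to H(J)$ as the components separate, gives $h(\sqrt{t} X + \sqrt{1-t}\, Y) = h(\mathcal{N}(0, 1/2)) + 2 H_2(\epsilon) + o(1)$ and similarly $h(\mu_\epsilon) = h(\mathcal{N}(0, 1/2)) + H_2(\epsilon) + o(1)$, so $\dEPI(\mu_\epsilon, \mu_\epsilon) = H_2(\epsilon) + o(1) \to 0$ for every $t \in (0,1)$ (with a harmless $O(\epsilon)$ correction at $t = 1/2$, where two of the four locations coincide).

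Claim (3) is the main obstacle, since we must rule out every centered Gaussian $\gamma_0 = \mathcal{N}(0, \sigma^2)$ and not merely $\gamma$ itself. My plan is to combine two lower bounds for $W_2^2(\mu_\epsilon, \gamma_0)$: first, the general variance inequality $W_2^2 \geq (\sigma_{\mu_\epsilon} - \sigma)^2$ handles every $\sigma$ outside a fixed window around $\sqrt{1/2 + C}$; second, for $\sigma$ in the remaining bounded range, I would use the one-dimensional formula $W_2^2 = \int_0^1 (F_{\mu_\epsilon}^{-1}(s) - \sigma \Phi^{-1}(s))^2\, ds$ and restrict attention to $s \in (1-\epsilon/2,\, 1-\epsilon/4)$. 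On that interval, $F_{\mu_\epsilon}^{-1}(s)$ lies within $O(1)$ of the far mode at $(1-\epsilon) d_\epsilon$, whereas $|\sigma \Phi^{-1}(s)| = O(\sqrt{\log d_\epsilon}) \ll d_\epsilon$; the resulting contribution is of order $\epsilon d_\epsilon^2/16 = C/16$, which strictly exceeds $1/3$ by the choice of $C$, yielding $\liminf_{\epsilon \downarrow 0} \inf_{\gamma_0 \in \Gamma} W_2^2(\mu_\epsilon, \gamma_0) > 1/3$ as required.
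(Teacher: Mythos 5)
Your construction is correct and proves the proposition, but it is genuinely different from the paper's. The paper takes
\begin{align*}
f_{\epsilon}(x) = \epsilon \tfrac{\sqrt{\epsilon}}{\sqrt{\pi}}e^{-\epsilon x^2 } + (1-\epsilon) \tfrac{\sqrt{1-\epsilon}}{\sqrt{\pi}}e^{-(1-\epsilon) x^2 },
\end{align*}
a mixture of two \emph{centered} Gaussians, one with vanishing weight $\epsilon$ and diverging variance $(2\epsilon)^{-1}$; you instead keep both components at variance $1/2$ and push the light component out to a distant mean $d_\epsilon=\sqrt{C/\epsilon}$ while recentering. Both realize the same mechanism (a Gaussian plus an $\epsilon$-mass of far-away probability that evades the pointwise Bakry--\'Emery condition yet forces a $W_2$ gap), and both converge weakly to $\mathcal{N}(0,1/2)$, but the proofs of (2) and (3) then diverge. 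For (2) the paper invokes a black-box convergence theorem for differential entropy under pointwise convergence of uniformly bounded densities with bounded second moments; you use the conditional decomposition $h(X)=h(X\mid J)+I(X;J)$, which gives the sharp two-sided bound $0\le\dEPI\le 2H_2(\epsilon)\to 0$ with no external citation and is arguably more self-contained (your ``$+o(1)$'' phrasing as the components separate is a little loose about the simultaneous limit $\epsilon\to 0$, $d_\epsilon\to\infty$, but the inequality $I\le H$ alone closes the argument cleanly). For (3) the paper uses a H\"older/moment comparison $W_2^2(\mu_\epsilon,\gamma_s)\ge s+1-2m_p^{1/p}(\mu_\epsilon)m_q^{1/q}(\gamma_s)$ and optimizes over $s$, getting an explicit numerical constant $\approx 0.44$; you split into a variance-comparison regime and a quantile-function regime and tune the free constant $C$, which is less explicit but more robust, since the lower bound scales with $C$ and can be made as large as desired. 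The variance-$(\sigma_{\mu_\epsilon}-\sigma)^2$ step plus the quantile integral over $s\in(1-\epsilon/2,1-\epsilon/4)$ is sound: $F_{\mu_\epsilon}^{-1}(s)$ is within $O(1)$ of $d_\epsilon$ there, while $\sigma\Phi^{-1}(s)=O(\sqrt{\log d_\epsilon})$ for $\sigma$ in the bounded window, giving a contribution at least of order $\epsilon d_\epsilon^2/16=C/16>1/3$. Your claim (1) computation $-(\log f_\epsilon)''=2-4d_\epsilon^2 r/(1+r)^2$ and the estimate that the bad set is an interval of length $O((\log d_\epsilon)/d_\epsilon)$ on which the density is $\exp(-d_\epsilon^2/4(1+o(1)))$ are both verified. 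In short: same phenomenon, different parametrization, and the proofs of each clause run on different tools (mutual information and quantile functions versus a cited entropy-convergence theorem and H\"older moments).
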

 
We remark that the measures $(\mu_{\epsilon})_{\epsilon>0}$ in the proposition are not necessarily pathological.  In fact, it suffices to consider simple Gaussian mixtures that approximate a Gaussian measure, albeit with heavy tails. 
 
 \begin{proof}
Define the density $f_{\epsilon}$ as
\begin{align}
f_{\epsilon}(x) = \epsilon \frac{\sqrt{\epsilon}}{\sqrt{\pi}}e^{-\epsilon x^2 } + (1-\epsilon) \frac{\sqrt{1-\epsilon}}{\sqrt{\pi}}e^{-(1-\epsilon) x^2 }.
\end{align}
Evidently, $f_{\epsilon}$ is a Gaussian mixture having unit variance; the mixture components have variance $(2\epsilon)^{-1}$ and $(2(1-\epsilon))^{-1}$, respectively.   

\emph{Proof of 1.} On any interval, as $\epsilon\downarrow 0$, the densities $(f_{\epsilon})_{\epsilon>0}$ and their derivatives converge uniformly to those of the Gaussian density with variance $1/2$.  Therefore, 
\begin{align}
-\lim_{\epsilon\downarrow 0} f''_{\epsilon}(x) = 2~~~~~\forall x\in \mathbb{R}.
\end{align}
Since the measures $\mu_{\epsilon}$ converge weakly to a Gaussian measure with variance $1/2$, it is straightforward to conclude that   $\lim_{\epsilon\downarrow 0} \mu_{\epsilon}(\Omega_{\epsilon})  = 1$, where $\Omega_{\epsilon}$ is defined as in the statement of the proposition. 

\emph{Proof of 2.}  
This  follows immediately  from \cite[Theorem 1]{godavarti2004convergence} %
due to pointwise convergence of uniformly bounded densities and uniformly bounded second moments.

\emph{Proof of 3.} Let $m_p(\mu)$ denote the $p^{\mathrm{th}}$ absolute moment associated to $\mu$.  For conjugate exponents $p, q\geq 1$, H\"older's inequality implies
\begin{align}
W_2^2(\mu_{\epsilon},\gamma_s) \geq s + 1 - 2 m^{1/p}_p(\mu_{\epsilon})m^{1/q}_q(\gamma_s),\notag
\end{align}
where $\gamma_s$ is the Gaussian measure with variance $s$. Fix $p = 3/2$ and $q=3$.  By the dominated convergence theorem, we have  $\lim_{\epsilon\downarrow 0} m_{3/2}(\mu_{\epsilon}) =  m_{3/2}(\gamma_{1/2})$.  Thus, using the  characterization of Gaussian moments, we have
\begin{align}
W_2^2(\mu_{\epsilon},\gamma_s) \geq s + 1 - 2 \sqrt{\frac{2s}{\pi}} \left( \Gamma(5/4)\right)^{4/3},\notag
\end{align}
where $\Gamma(\cdot)$ denotes the Gamma function.  The minimum is achieved at $s = \frac{2}{\pi} \left( \Gamma(5/4)\right)^{4/3}$, so that 
\begin{align}
 \inf_{s> 0} W_2^2(\mu_{\epsilon},\gamma_s) \geq 1 - \frac{2}{\pi}  \left( \Gamma(5/4)\right)^{4/3} \approx 0.441562 > 1/3.\notag
\end{align}
\end{proof}
\begin{remark}
Our construction of $f_{\epsilon}$ is closely related to the counterexamples proposed by Bobylev and Cercignani in their disproof of Cercignani's conjecture on entropy production in the Boltzmann equation \cite{bobylev1999rate}.  This construction also appeared in the context of the Boltzmann equation in \cite[Proposition 23]{Carlen201085}.
\end{remark}

\section{Discussion and Proofs}\label{sec:proofs}

The remainder of this paper makes use of ideas from optimal transport, and reader familiarity assumed.  The unfamiliar reader is directed to the comprehensive introductions \cite{villani2003topics, villani2008optimal}.     We recall that a map $T:\R^n\to\R^n$ is said to transport a measure $\mu$ to $\nu$ if the pushforward of $\mu$ under $T$ is $\nu$ (i.e., $\nu = T \# \mu$).

Our starting point comes from a recent paper of Rioul \cite{rioul2016yet}.  Through an impressively short sequence of direct but carefully chosen steps, Rioul recently gave a new proof of the EPI based on transportation of measures.  From his proof,  we may readily distill the following:
\begin{lemma}
Let $T_1 : \mathbb{R}^n\to \mathbb{R}^n$ and $T_2 : \mathbb{R}^n\to \mathbb{R}^n$ be diffeomorphisms  satisfying $\mu = T_1 \# \gamma$ and $\nu = T_2 \# \gamma$.  If $\mu$ and $\nu$ have finite entropies, then
\begin{align}
\dEPI(\mu, \nu)  &\geq \EE \log \frac{\det (t \nabla T_1(X^*) +(1-t)  \nabla T_2(Y^*)  )}{ \det (\nabla T_1(X^*))^t  \det (\nabla T_2(Y^*))^{1-t}}, \label{RioulDeficit}
\end{align}
where $X^*\sim \gamma$ and $Y^*\sim \gamma$ are independent.
\end{lemma}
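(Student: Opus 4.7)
The plan is to realize $Z := \sqrt{t}X + \sqrt{1-t}Y = \sqrt{t}T_1(X^*) + \sqrt{1-t}T_2(Y^*)$ as a genuine diffeomorphic transport of $\gamma$ on $\mathbb{R}^n$ (not $\mathbb{R}^{2n}$), so that the standard change-of-variables formula for entropy can be invoked directly. The trick is to exploit the orthogonal invariance of $\gamma\otimes \gamma$: perform a rotation in $(X^*,Y^*)$-space, freeze the coordinate orthogonal to $Z$, and apply change of variables only in the remaining $n$ coordinates.

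Concretely, introduce $U^* := \sqrt{t}X^* + \sqrt{1-t}Y^*$ and $V^* := \sqrt{1-t}X^* - \sqrt{t}Y^*$, so that $(U^*, V^*)\sim \gamma\otimes \gamma$ and $X^* = \sqrt{t}U^* + \sqrt{1-t}V^*$, $Y^* = \sqrt{1-t}U^* - \sqrt{t}V^*$. For each fixed $v\in \mathbb{R}^n$, define
\begin{equation*}
G_v(u) := \sqrt{t}\, T_1\bigl(\sqrt{t}u + \sqrt{1-t}v\bigr) + \sqrt{1-t}\, T_2\bigl(\sqrt{1-t}u - \sqrt{t}v\bigr),
\end{equation*}
so that $Z = G_{V^*}(U^*)$. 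The chain rule gives $\nabla G_v(u) = t\,\nabla T_1(\sqrt{t}u + \sqrt{1-t}v) + (1-t)\,\nabla T_2(\sqrt{1-t}u - \sqrt{t}v)$, which, once we substitute back, equals $t\,\nabla T_1(X^*) + (1-t)\,\nabla T_2(Y^*)$.

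Assuming $G_v$ is a diffeomorphism of $\mathbb{R}^n$ for $\gamma$-a.e.\ $v$, the change of variables formula applied to $U^*\sim\gamma$ yields
\begin{equation*}
h(Z\mid V^*=v) \;=\; h(\gamma) + \mathbb{E}\bigl[\log\det \nabla G_v(U^*)\bigr],
\end{equation*}
and averaging over $V^*$ together with the trivial inequality $h(Z)\geq h(Z\mid V^*)$ gives
\begin{equation*}
h(Z) \;\geq\; h(\gamma) + \mathbb{E}\log\det\bigl(t\,\nabla T_1(X^*) + (1-t)\,\nabla T_2(Y^*)\bigr).
\end{equation*}
Subtracting $t\,h(X)+(1-t)h(Y) = h(\gamma) + \mathbb{E}\log\bigl[\det(\nabla T_1(X^*))^t\det(\nabla T_2(Y^*))^{1-t}\bigr]$ (using the change-of-variables identity $h(X)=h(\gamma)+\mathbb{E}\log\det\nabla T_1(X^*)$ and its analogue for $Y$) produces \eqref{RioulDeficit}.

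The main obstacle is the diffeomorphism hypothesis on $G_v$, together with well-definedness of $\log\det \nabla G_v$. This is cleanest when $T_1=\nabla\varphi_1$ and $T_2=\nabla\varphi_2$ are Brenier maps (gradients of convex potentials), since then $G_v$ is the $u$-gradient of $u\mapsto \varphi_1(\sqrt{t}u+\sqrt{1-t}v) + \varphi_2(\sqrt{1-t}u-\sqrt{t}v)$, a strictly convex function; monotonicity yields injectivity and coercivity gives surjectivity, while the PSD Jacobians make all determinants positive and concavity of $\log\det$ on positive matrices even shows that the deficit in \eqref{RioulDeficit} is nonnegative (giving the EPI itself as a by-product). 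For general diffeomorphisms one would work with $|\det\cdot|$ throughout and appeal to the finite-entropy hypothesis to ensure the relevant logarithms are integrable.
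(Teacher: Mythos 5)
Your reconstruction is exactly Rioul's argument, which the paper cites as the source of this lemma without reproducing the proof; the rotation to $(U^*, V^*)$, the conditioning bound $h(Z)\ge h(Z\mid V^*)$, and the change-of-variables computation via $G_v$ all match. The caveat you flag at the end is the right one: a non-injective $G_v$ would turn the conditional change-of-variables identity into a ``$\leq$'' (a fold decreases entropy), which is the wrong direction for the argument, so the lemma tacitly requires $G_v$ to be a diffeomorphism of $\R^n$ for a.e.\ $v$ -- a property that does not follow from $T_1,T_2$ being diffeomorphisms in general, but that is guaranteed in the paper's application by the Brenier structure ($\nabla T_i$ symmetric positive definite, hence $G_v$ the gradient of a strictly convex coercive function), and in Rioul's original proof by the Kn\"othe maps' triangular-with-positive-diagonal Jacobians.
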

\begin{remark}
For a vector-valued map $\Phi=(\Phi_1, \Phi_2, \dots, \Phi_n) : \mathbb{R}^n\to \mathbb{R}^n$, we write $\nabla \Phi$ to denote its Jacobian.  That is, $(\nabla \Phi (x))_{ij} = \frac{\partial}{\partial x_i} \Phi_j(x)$.
\end{remark}

In words, \eqref{RioulDeficit} shows that the deficit in the EPI can always be bounded from below by a function of the Jacobians $\nabla T_1$ and $\nabla T_2$, where $T_1$ and $T_2$ are are invertible and differentiable maps that transport measures $\gamma$ to $\mu$ and $\gamma$ to $\nu$, respectively. 

When $T_1$ and $T_2$ are Kn\"othe maps (see \cite{knothe1957contributions, rosenblatt1952remarks, villani2008optimal}), the Jacobians $\nabla T_1$ and $\nabla T_2$ are upper triangular matrices with positive diagonal entries.  Using this property, Rioul concludes $\dEPI(\mu,\nu)\geq 0$ using concavity of the logarithm applied to the eigenvalues (diagonal entries) of $\nabla T_1$ and $\nabla T_2$. By strict concavity of the logarithm, saturation of this inequality implies  the diagonal entries of $\nabla T_1$ and $\nabla T_2$ must be equal almost everywhere.  Combining this information with the fact that a relative entropy term (omitted above) must vanish, Rioul recovers the well known necessary and sufficient conditions for $\dEPI(\mu, \nu)$ to vanish.  Specifically, $\mu$ and $\nu$ must be Gaussian measures, equal up to translation.  

In our proof, instead of the Kn\"othe map, we shall use the Brenier map from optimal transport theory, which has a useful rigid structure: 

\begin{theorem}[Brenier-McCann, \cite{brenier1991polar, mccann1995existence, villani2003topics}]
Consider two probability measures $\mu, \nu$ on $\mathbb{R}^n$, and assume that $\mu$ is absolutely continuous with respect to the Lebesgue measure. There exists a unique map $T$ (which we shall call the {\bf Brenier map}) transporting $\mu$ onto $\nu$ that arises as the gradient of a convex lower semicontinuous function. Moreover, this map is such that
$$W^2 _2(\mu, \nu) = \mathbb{E}[|X - T(X)|^2],$$
where $X$ is a random variable with law $\mu$, and therefore $T(X)$ has law $\nu$. In other words, $(X, T(X))$ is an optimal coupling for the Wasserstein distance $W_2$. 
\end{theorem}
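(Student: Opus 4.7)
The plan is to combine Kantorovich's linear-programming relaxation of the transport problem with Rockafellar's characterization of cyclically monotone sets. First I would observe that for any coupling $\pi$ of $\mu$ and $\nu$ (both assumed to have finite second moments, else the statement is vacuous),
\begin{equation*}
\int |x-y|^2\, d\pi(x,y) \;=\; \int |x|^2\, d\mu(x) + \int |y|^2\, d\nu(y) - 2\int \langle x,y\rangle\, d\pi(x,y),
\end{equation*}
so minimizing the quadratic cost over $\Pi(\mu,\nu)$ is equivalent to maximizing $\int \langle x,y\rangle\, d\pi$. Weak lower semicontinuity of the cost combined with the tightness of $\Pi(\mu,\nu)$ (guaranteed by Prokhorov since the marginals are tight) yields an optimizer $\pi^\star$.

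Next I would extract structure from $\pi^\star$. A standard result in optimal transport shows that the support of any optimal plan for this bilinear cost is cyclically monotone: for every finite family $(x_1,y_1),\dots,(x_k,y_k)\in \mathrm{supp}\,\pi^\star$ and every permutation $\sigma$,
\begin{equation*}
\sum_{i=1}^k \langle x_i,y_i\rangle \;\geq\; \sum_{i=1}^k \langle x_i, y_{\sigma(i)}\rangle.
\end{equation*}
Rockafellar's theorem then yields a proper, convex, lower semicontinuous function $\varphi:\mathbb{R}^n\to\mathbb{R}\cup\{+\infty\}$ with $\mathrm{supp}\,\pi^\star \subset \partial\varphi$. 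Invoking the absolute continuity of $\mu$, convex functions on $\mathbb{R}^n$ are differentiable Lebesgue-a.e.\ and hence $\mu$-a.e., so $\partial\varphi(x) = \{\nabla\varphi(x)\}$ for $\mu$-a.e.\ $x$. This forces $\pi^\star$ to be concentrated on the graph of $T := \nabla\varphi$, giving $\pi^\star = (\mathrm{id},T)\# \mu$, the push-forward relation $T\#\mu = \nu$, and the identity $W_2^2(\mu,\nu) = \mathbb{E}|X-T(X)|^2$. For uniqueness, if $\tilde T = \nabla\tilde\varphi$ is any other convex-gradient map transporting $\mu$ to $\nu$, then $(\mathrm{id},\tilde T)\#\mu$ is itself supported on a cyclically monotone set and couples $\mu$ to $\nu$, so by the same single-valuedness argument it must coincide with $\pi^\star$ and hence $\tilde T = T$ $\mu$-a.e.

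The main obstacle is the measurable-selection step passing from cyclic monotonicity of $\mathrm{supp}\,\pi^\star$ to a genuine Lebesgue-a.e.\ defined transport map; this is exactly where the absolute continuity of $\mu$ is indispensable, since if $\mu$ charged a set where $\varphi$ has a non-trivial subdifferential (e.g.\ at a kink), the optimal plan need not be deterministic and no Brenier map would exist. A secondary technical point is Rockafellar's theorem itself, which I would prove by the explicit construction $\varphi(x) := \sup \sum_j \langle y_j,\, x_{j+1}-x_j\rangle$, the supremum taken over finite chains in $\mathrm{supp}\,\pi^\star$ anchored at a fixed base point $(x_0,y_0)$; verifying convexity, lower semicontinuity, and the inclusion $\mathrm{supp}\,\pi^\star \subset \partial\varphi$ is the technical heart of the argument.
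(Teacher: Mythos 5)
The paper does not prove this theorem; it records it as a foundational result and cites Brenier, McCann and Villani. Your sketch is the standard McCann-style argument for the finite-second-moment case: reduce the quadratic cost to maximizing $\int\langle x,y\rangle\,d\pi$, extract an optimal plan $\pi^\star$ by tightness and lower semicontinuity, show $\mathrm{supp}\,\pi^\star$ is cyclically monotone, invoke Rockafellar to produce a convex lower semicontinuous $\varphi$ with $\mathrm{supp}\,\pi^\star\subset\partial\varphi$, and use Lebesgue-a.e.\ differentiability of convex functions together with absolute continuity of $\mu$ to conclude the plan is carried by the graph of $\nabla\varphi$. This is a legitimate route and would serve as a proof here.

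Two points deserve attention. First, your uniqueness step is too quick: the observation that $(\mathrm{id},\tilde T)\#\mu$ is supported on a cyclically monotone set does not by itself force it to coincide with $\pi^\star$. The standard fix is to show (i) that cyclical monotonicity of the support implies optimality, so both plans are Kantorovich optimizers, and (ii) that optimizers are unique, by running the Rockafellar/differentiability argument on the average $\tfrac12(\pi_1+\pi_2)$ of two optimizers to show it concentrates on a single gradient graph, whence $\pi_1=\pi_2$. Second, the differentiability step needs a little care because $\varphi$ may take the value $+\infty$; a.e.\ differentiability holds only on $\mathrm{int}(\dom\varphi)$, so one must verify that $\mu$ gives full mass to that interior, using that the first projection of $\mathrm{supp}\,\pi^\star$ lies in $\dom(\partial\varphi)$ and that the boundary of a convex set is Lebesgue-null. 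Finally, note that the existence and uniqueness of the Brenier map as stated hold \emph{without} moment assumptions — that was precisely McCann's 1995 contribution, obtained by constructing $\varphi$ directly from a cyclically monotone coupling rather than from a Kantorovich minimizer. Your argument uses the second-moment decomposition and the variational characterization, so it only covers the case where $W_2(\mu,\nu)<\infty$; that suffices for the paper's applications (log-concave measures have all moments) but is a genuine restriction relative to the cited theorem.
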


In contrast to Rioul's argument based on Kn\"othe maps, if $T_1$ and $T_2$ are taken instead to be Brenier maps (again, transporting $\gamma$ to $\mu$ and $\gamma$ to $\nu$, respectively), then the Jacobians $\nabla T_1$ and $\nabla T_2$ are symmetric positive definite by the Brenier-McCann Theorem.  Thus, concavity of the log-determinant function on the positive semidefinite cone immediately gives the EPI from \eqref{RioulDeficit}.  Moreover, by strict concavity of the log-determinant function, equality in the EPI implies  $\nabla T_1(X^*)  = \nabla T_2(Y^*)$ almost everywhere, and are thus constant.  Hence, $T_1$ and $T_2$ are necessarily affine functions, identical up to translation.  This immediately implies  $\dEPI(\mu,\nu)= 0$ only if $\mu,\nu$ are Gaussian measures with identical covariances. 

Unfortunately, while both arguments easily settle cases of equality in the EPI, neither yield quantitative stability estimates.  However,  we note that the Brenier map is generally better suited for establishing quantitative stability in functional inequalities.  Indeed, it was remarked by Figalli, Maggi and Pratelli  in their comparison to Gromov's proof of the isoperimetric inequality  that  the Brenier map is generally more efficient than the Kn\"othe map in establishing quantitative stability estimates    due to its rigid structure \cite{figalli2010mass}.  We shall fruitfully exploit the properties of the Brenier map in our proof of Theorem \ref{thm:LogConcaveStable}.

\subsection{Proof of Theorem \ref{thm:LogConcaveStable}}

The proof of Theorem \ref{thm:LogConcaveStable} is  short, but makes use of  several foundational results from the theory of optimal transport.  We will need the following lemma; a proof can be found  in the Appendix. 
\begin{lemma}\label{lem:FrobNormIneq}
For positive definite matrices $A, B$ and  $t \in [0,1]$, we have
\begin{align}
\log \det(t A + (1 - t) B) \geq t \log \det(A)+ (1 - t) \log \det (B) + \frac{t (1 - t)}{2 \max\{\lambda^2_{\mathrm{max}}(A), \lambda^2_{\mathrm{max}}(B) \}} \| A - B \|_F^2, \nonumber
\end{align}
where $\lambda_{\mathrm{max}}(\cdot)$ denotes the largest eigenvalue. 
\end{lemma}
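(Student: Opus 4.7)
The plan is to treat the inequality as a quantitative refinement of the classical concavity of $f(s) := \log\det(sA + (1-s)B)$ on $[0,1]$. Setting $M_s := sA + (1-s)B$ (positive definite by assumption), I would aim to establish a uniform upper bound $f''(s) \leq -c$ on $[0,1]$ with
\[c = \frac{\|A-B\|_F^2}{\max\{\lambda^2_{\max}(A), \lambda^2_{\max}(B)\}};\]
the claimed improvement over ordinary concavity then falls out of a one-step comparison argument. In short: differentiate twice, bound the second derivative below in operator/Frobenius norms, and integrate back up.

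First I would compute $f''(s)$ via matrix calculus. Starting from $f'(s) = \tr(M_s^{-1}(A-B))$ and using $\partial_s M_s^{-1} = -M_s^{-1}(A-B)M_s^{-1}$ gives
\[f''(s) = -\tr\!\bigl(M_s^{-1}(A-B)M_s^{-1}(A-B)\bigr) = -\bigl\|M_s^{-1/2}(A-B)M_s^{-1/2}\bigr\|_F^2,\]
where the second equality uses that $M_s^{-1/2}(A-B)M_s^{-1/2}$ is symmetric. Next I would lower-bound this in terms of $\|A-B\|_F^2$: writing $A-B = M_s^{1/2}\bigl(M_s^{-1/2}(A-B)M_s^{-1/2}\bigr)M_s^{1/2}$ and applying the standard sandwich bound $\|XYZ\|_F \leq \|X\|_{\mathrm{op}}\|Y\|_F\|Z\|_{\mathrm{op}}$ yields
\[\|A-B\|_F \leq \|M_s\|_{\mathrm{op}}\,\bigl\|M_s^{-1/2}(A-B)M_s^{-1/2}\bigr\|_F,\]
so $|f''(s)| \geq \|A-B\|_F^2/\|M_s\|_{\mathrm{op}}^2$. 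Convexity of $\lambda_{\max}$ then gives $\|M_s\|_{\mathrm{op}} \leq s\lambda_{\max}(A) + (1-s)\lambda_{\max}(B) \leq \max\{\lambda_{\max}(A), \lambda_{\max}(B)\}$, which produces exactly the constant $c$ above.

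Finally, with $f''(s) \leq -c$ on $[0,1]$, I would conclude by noting that $g(s) := f(s) + \tfrac{c}{2}s(1-s)$ satisfies $g''(s) \leq 0$, hence is concave, so $g(t) \geq (1-t)g(0) + tg(1)$, which rearranges to precisely the stated inequality. The only genuine obstacle---minor though it is---is the Frobenius lower bound in the second step; the key algebraic observation that makes it work is that symmetry of $M_s^{-1/2}(A-B)M_s^{-1/2}$ lets its squared Frobenius norm coincide with the trace appearing in $f''(s)$, so the sandwich estimate applied in the reverse direction cleanly converts a spectral bound on $M_s$ into a Frobenius bound on $A-B$ without any loss of dimension-dependent constants.
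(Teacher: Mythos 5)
Your computation of $f''(s)$, the sandwich estimate $\|A-B\|_F \leq \|M_s\|_{\mathrm{op}}\,\|M_s^{-1/2}(A-B)M_s^{-1/2}\|_F$, and the bound $\|M_s\|_{\mathrm{op}} \leq \max\{\lambda_{\max}(A),\lambda_{\max}(B)\}$ are all correct, and the route is essentially the one the paper takes: restrict $\log\det$ to the segment from $B$ to $A$, show the second derivative along it is uniformly at most $-c$, and integrate. The paper packages the last step as a separate strong-convexity-between-two-points lemma and gets the eigenvalue bound via the Kronecker-product formula $\nabla^2(-\log\det)(M) = M^{-1}\otimes M^{-1}$, whereas you compute the one-dimensional second derivative directly and exploit symmetry of $M_s^{-1/2}(A-B)M_s^{-1/2}$; these are the same argument in different notation, and yours is arguably the more elementary presentation. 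The one flaw is a sign slip in the final integration. With $g(s) := f(s) + \tfrac{c}{2}s(1-s)$ you have $g''(s) = f''(s) - c$, which is indeed $\leq 0$, but the chord comparison $g(t) \geq (1-t)g(0) + t\,g(1)$ then rearranges to $f(t) \geq (1-t)f(0) + t f(1) - \tfrac{c}{2}t(1-t)$, with the wrong sign on the correction term (weaker than plain concavity). The correct auxiliary function is $g(s) := f(s) - \tfrac{c}{2}s(1-s)$: then $g''(s) = f''(s) + c \leq 0$ is precisely where the bound $f'' \leq -c$ is used, $g(0)=f(0)$ and $g(1)=f(1)$, and $g(t) \geq (1-t)g(0) + t\,g(1)$ rearranges to $f(t) \geq t f(1) + (1-t)f(0) + \tfrac{c}{2}t(1-t)$, which is the stated inequality since $f(1)=\log\det A$ and $f(0)=\log\det B$.
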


In addition, we remind the reader that a random vector $X$ having log-concave density enjoys (i) finite second moment (in fact, finite moments of all orders); and (ii) finite entropy $h(X)$. Since Theorem \ref{thm:LogConcaveStable} requires log-concave densities, these conditions will be implicitly assumed throughout the proof. %

\begin{proof}[Proof of Theorem \ref{thm:LogConcaveStable}]  Assume first that the densities $e^{-\varphi}$ and $e^{-\psi}$ are smooth and strictly positive on $\mathbb{R}^n$.  Also, let $X^*\sim \gamma$ and $Y^*\sim \gamma$ be independent.    Define $T_1$ to be the Brenier map transporting $\gamma$ to $\mu$, and let $T_2$ denote the Brenier map transporting $\gamma$ to $\nu$.   We recall here that a Brenier map is always the gradient of a convex function by the Brenier-McCann theorem, and therefore $\nabla T_2$ and $\nabla T_1$ are positive semidefinite since they coincide with Hessians of convex functions.  In fact, since all densities involved are non-vanishing,  they are positive definite. Moreover, when the densities are strictly positive on the whole space, we know by results of Caffarelli \cite{caffarelli1990localization, caffarelli1992regularity} that the maps $T_1$ and $T_2$ are $C^1$-smooth.

Using the assumed smoothness  and convexity of the potentials $\varphi$ and $\psi$, Caffarelli's contraction theorem (see \cite{caffarelli2000monotonicity} and, e.g., \cite[Theorem 9.14]{villani2003topics}) implies $T_1$ and $T_2$ are 1-Lipschitz, so that $\lambda_{\mathrm{max}}(\nabla T_1) \leq 1$ and $\lambda_{\mathrm{max}}(\nabla T_2) \leq 1$.     Therefore,  since $\nabla T_2$ and $\nabla T_1$ are positive definite,  Lemma \ref{lem:FrobNormIneq} yields the following (pointwise) estimate 
\begin{align}
\log \det(t \nabla T_1 + (1 - t) \nabla T_2) \geq t \log \det(\nabla T_1)+ (1 - t) \log \det (\nabla T_2) +  \frac{t (1 - t) }{2  } \| \nabla T_1 - \nabla T_2 \|_F^2. \nonumber
\end{align}
Combined with  \eqref{RioulDeficit} we obtain:
\begin{align*}
\dEPI(\mu, \nu)   \geq  \frac{t (1 - t) }{2  } \EE  \| \nabla (T_1(X^*) - X^*) - \nabla (T_2(Y^*)-Y^*) \|_F^2.
\end{align*}
Now, define matrices $A = \EE [\nabla (T_1(X^*) - X^*)]$ and $B = \EE [\nabla (T_2(Y^*) - Y^*)]$.  By orthogonality, we have
\begin{align*}
&\EE  \| \nabla (T_1(X^*) - X^*) - \nabla (T_2(Y^*)-Y^*) \|_F^2\\
& = \EE  \| \nabla (T_1(X^*) - (I+A)X^*)  -  \nabla (T_2(Y^*)-(I+B)Y^*) \|_F^2 + \| A-B\|_F^2\\
&=\EE  \| \nabla (T_1(X^*) - (I+A)X^*)\|_F^2 + \EE  \|  \nabla (T_2(Y^*)-(I+B)Y^*) \|_F^2 + \| A-B\|_F^2\\
&\geq \EE  |  T_1(X^*) - (I+A)X^*|^2 + \EE  | T_2(Y^*)-(I+B)Y^* |^2 + \| (I+A)-(I+B) \|_F^2.
\end{align*}
The final inequality is due to the  Gaussian  Poincar\'e inequality  $\int |f|^2 d \gamma \leq \int |\nabla f|^2 d \gamma$, holding for every $C^1$-smooth $f: \R^n\to \R$ with mean zero.  Indeed, its application   is justified by  $C^1$-smoothness of the Brenier maps  among log-concave distributions,  
 and the identity $$\EE[ T_1(X^*) - (I+A)X^*] = \int x d\mu(x) - (I+A)\int x d \gamma(x) =0,$$ which holds similarly for $Y^*$.  The desired inequality now follows from the definition of $W_2$ and the  identity $W_2^2(\gamma_{1}, \gamma_{2}) = \| \Sigma_{\gamma_1}^{1/2} - \Sigma_{\gamma_2}^{1/2}  \|_F^2$ (e.g., \cite{dowson1982frechet}).  The application of this identity is valid   because  $I+A$ and $I+B$ are positive definite, a consequence of the positive definiteness of  $\nabla T_1$ and $\nabla T_2$ noted previously.   

Thus, Theorem \ref{thm:LogConcaveStable} holds when the densities are smooth and positive everywhere.  If this is not the case, we may first convolve  $\mu, \nu$ with a Gaussian measure so that the resulting densities will be both smooth and positive. The general result then follows by considering arbitrarily small perturbations, which is justified by Proposition \ref{smallPerturb}, found below. 
\end{proof}

\begin{proposition}\label{smallPerturb}
For a probability measure $\mu$ and $s>0$, let $\mu_s$ denote the probability measure corresponding to  $X+ \sqrt{s} Z$, where $X\sim \mu$ and $Z\sim \gamma$ are independent.  If probability measures $\mu, \nu$ have finite second moments, and 
\begin{align}
 \Delta(\mu, \nu):=\inf_{\gamma_1,\gamma_2\in \Gamma} \Big( W_2^2(\mu,\gamma_1) + W_2^2(\nu,\gamma_2) + W_2^2(\gamma_1,\gamma_2) \Big),\notag
\end{align}
then  $\lim_{s\downarrow 0} \Delta(\mu_s, \nu_s)  =  \Delta(\mu, \nu)$.  Moreover, if $\mu, \nu$ have densities, then $\lim_{s \downarrow 0} \dEPI(\mu_s, \nu_s)  =  \dEPI(\mu, \nu)$.  Finally, if $\mu$ satisfies the Bakry-\'Emery criterion \eqref{BakryEmery} with parameter $\eta >0$, then $\mu_s$ satisfies the Bakry-\'Emery criterion \eqref{BakryEmery} with parameter $\eta/(1+s\eta)$.
\end{proposition}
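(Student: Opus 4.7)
My plan is to handle the three assertions separately; continuity of $\dEPI$ will be the main obstacle. Throughout I write $\mu_s = \mu * \gamma^n_s$, where $\gamma^n_s$ denotes the centered Gaussian on $\mathbb{R}^n$ with covariance $sI$.

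For the convergence of $\Delta$, the key observation is that convolution with $\gamma^n_s$ is a $W_2$ contraction: given any optimal coupling $(X, Y) \sim (\mu, \gamma_1)$ and an independent $Z \sim \gamma^n$, the pair $(X + \sqrt{s}Z, Y + \sqrt{s}Z)$ couples $\mu_s$ to $(\gamma_1)_s$ at cost $W_2^2(\mu, \gamma_1)$. Since $(\gamma_1)_s$ and $(\gamma_2)_s$ remain in $\Gamma$, taking an infimum immediately gives $\Delta(\mu_s, \nu_s) \leq \Delta(\mu, \nu)$. For the matching liminf bound, the trivial coupling $(X, X + \sqrt{s}Z)$ yields $W_2^2(\mu, \mu_s) \leq sn$. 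Choosing $(\gamma_1^s, \gamma_2^s) \in \Gamma^2$ near-optimal for $\Delta(\mu_s, \nu_s)$, the triangle inequality gives $W_2(\mu, \gamma_i^s) \leq \sqrt{sn} + W_2(\mu_s, \gamma_i^s)$, with the second term bounded by $\sqrt{\Delta(\mu, \nu)}$; expanding the squares and summing the three terms in $\Delta(\mu, \nu)$ yields $\Delta(\mu, \nu) \leq \Delta(\mu_s, \nu_s) + O(\sqrt{s})$, as desired.

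The continuity of $\dEPI$ is the delicate point, since continuity of differential entropy under Gaussian smoothing does not follow from weak convergence alone. The key simplification is the distributional identity
\[\sqrt{t}\,(X + \sqrt{s}Z_1) + \sqrt{1-t}\,(Y + \sqrt{s}Z_2) \stackrel{d}{=} \sqrt{t}X + \sqrt{1-t}Y + \sqrt{s}\,\tilde Z,\]
for $Z_1, Z_2, \tilde Z$ i.i.d.\ standard Gaussians independent of $(X, Y)$. Consequently each of the three entropies in $\dEPI(\mu_s, \nu_s)$ is a Gaussian smoothing of its counterpart in $\dEPI(\mu, \nu)$, reducing the problem to showing $\lim_{s\downarrow 0} h(\lambda_s) = h(\lambda)$ for any $\lambda$ with density and finite second moment. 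The bound $\liminf h(\lambda_s) \geq h(\lambda)$ is immediate from Shannon's EPI ($N(\lambda_s) \geq N(\lambda) + s$). For the matching upper bound I would use the decomposition
\[h(\lambda_s) = -D(\lambda_s \| \gamma^n) + \tfrac{n}{2}\log(2\pi) + \tfrac{1}{2}\EE_{\lambda_s}[|X|^2],\]
and observe that $\EE_{\lambda_s}[|X|^2] = \EE_\lambda[|X|^2] + sn$ converges, while $D(\cdot \| \gamma^n)$ is lower semicontinuous under weak convergence, and $\lambda_s \to \lambda$ weakly. This yields $\limsup h(\lambda_s) \leq h(\lambda)$ and the required continuity.

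Finally, for the Bakry-\'Emery claim, the Gaussian $\gamma^n_s$ itself satisfies \eqref{BakryEmery} with parameter $1/s$. By a classical consequence of the Pr\'ekopa--Leindler inequality---the harmonic-mean rule for strong log-concavity, exact in the Gaussian case where inverse covariances correspond to Bakry-\'Emery constants---the convolution of two densities satisfying \eqref{BakryEmery} with parameters $a$ and $b$ satisfies \eqref{BakryEmery} with parameter $ab/(a+b)$. Taking $a = \eta$ and $b = 1/s$ gives the claimed $\eta/(1+s\eta)$.
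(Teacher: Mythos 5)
Your proof is correct, and each of the three parts is slightly more self-contained than the paper's. For the first claim the paper proves the same contraction estimate $\Delta(\mu_s,\nu_s)\leq\Delta(\mu,\nu)$, but for the reverse inequality it first argues that the infimum defining $\Delta(\mu,\nu)$ can be restricted to a set $\Gamma_K$ of Gaussians with bounded second moments (using $W_2^2(\mu,\nu)\geq(\sqrt{\EE|X|^2}-\sqrt{\EE|Y|^2})^2$), and from there deduces $\Delta(\mu_s,\nu_s)\geq\Delta(\mu,\nu)-\sqrt{s}\,C$. Your route --- bounding $W_2(\mu_s,\gamma_i^s)$ directly by $\sqrt{\Delta(\mu,\nu)+\epsilon}$ from the near-optimality of the $\gamma_i^s$ and the already-proved contraction, then applying the triangle inequality to each term --- reaches the same $O(\sqrt{s})$ estimate while sidestepping the compactness reduction, which is a nice simplification. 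For the second claim the paper simply cites a lemma of Carlen and Soffer for $\lim_{s\downarrow 0}h(\lambda_s)=h(\lambda)$; your reduction via the distributional identity $\sqrt{t}(X+\sqrt{s}Z_1)+\sqrt{1-t}(Y+\sqrt{s}Z_2)\overset{d}{=}\sqrt{t}X+\sqrt{1-t}Y+\sqrt{s}\tilde{Z}$ is the same implicit step, and your subsequent self-contained argument (Shannon's EPI for the liminf, and the relative-entropy decomposition together with lower semicontinuity of $D(\cdot\|\gamma)$ under weak convergence for the limsup) is a valid alternative that makes the cited fact transparent. For the third claim, the harmonic-mean composition rule $a,b\mapsto ab/(a+b)$ for Bakry--\'Emery parameters under convolution is exactly the content of the reference the paper cites (Saumard--Wellner, Thm.\ 3.7(b)), so this part is essentially identical.
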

\begin{proof}
Evidently, if $\gamma_s$ denotes the Gaussian measure with covariance matrix $s \mathrm{I}$, then $\mu_s = \mu *\gamma_s$.    Observe that for any fixed Gaussian measures $\gamma_1,\gamma_0$ (not to be confused with $\gamma_s$), 
\begin{align*}
&W_2^2(\mu,\gamma_1) + W_2^2(\nu,\gamma_2) + W_2^2(\gamma_1,\gamma_2)\\
&\geq W_2^2(\mu* \gamma_s,\gamma_1* \gamma_s) + W_2^2(\nu* \gamma_s,\gamma_2* \gamma_s) + W_2^2(\gamma_1* \gamma_s,\gamma_2* \gamma_s)\\
&\geq \Delta(\mu_s, \nu_s),
\end{align*}
so that $\Delta(\mu, \nu)\geq \Delta(\mu_s, \nu_s)$.  It remains to prove an inequality in the reverse direction.

In general, $W_2^2(\mu, \nu) \geq (\sqrt{\EE |X|^2} - \sqrt{\EE |Y|^2} )^2$ when $X\sim \mu$ and $Y\sim \nu$ by the Cauchy-Schwarz inequality.  Therefore, there is $K = K(\mu,\nu)<\infty$ depending only on the second moments of $\mu$ and $\nu$ such that 
\begin{align*}
 \Delta(\mu, \nu)=\inf_{\gamma_1,\gamma_2\in \Gamma_K} \Big( W_2^2(\mu,\gamma_1) + W_2^2(\nu,\gamma_2) + W_2^2(\gamma_1,\gamma_2) \Big),
\end{align*}
where $\Gamma_K\subset \Gamma$ denotes the set of centered Gaussian measures with second moments bounded by $K$. From this, it is straightforward to argue that for $s$ sufficiently small, 
\begin{align}
 \Delta(\mu_s, \nu_s) &\geq 
\inf_{\gamma_1,\gamma_2\in \Gamma} \Big( W_2^2(\mu,\gamma_1) + W_2^2(\nu,\gamma_2) + W_2^2(\gamma_1,\gamma_2) \Big) - \sqrt{s} C,
\end{align}
where $C = C(\mu,\nu)<\infty$ depends only on the second moments of  $\mu$ and $\nu$.  Thus, the first part of the claim follows.  

The second claim follows immediately  from the fact that $\lim_{s\downarrow 0}h(\mu_s)=h(\mu)$ for any $\mu$ having density and finite second moment. See, e.g., \cite[Lemma 1.2]{carlen1991entropy}. 

The third claim can be found in \cite[Theorem 3.7(b)]{saumard2014log}. %
\end{proof} 

\begin{remark} \label{remark_nonconvex}
The assumption of log-concavity is mainly used to ensure that the optimal transport map is Lipschitz. This can sometimes still be the case in other situations. For example, the recent work \cite{colombo2016lipschitz} shows that this property holds for certain families of bounded perturbations of the Gaussian measure, including the radially symmetric case. 
\end{remark}

\subsection{Proof of Corollary 3}

\begin{proof}[Proof of Corollary \ref{cor:entropyPower}]
Lieb's derivation of $N(X+Y) \geq N(X)+N(Y)$ from the inequality $h(\sqrt{t} \tilde{X} + \sqrt{1-t}\tilde{Y}) \geq t h(\tilde{X}) + (1-t)h(\tilde{Y})$ proceeds by choosing $t$ to satisfy $t/(1-t) = N(X)/N(Y)$ and identifying $\tilde X = t^{-1/2} X$ and $\tilde Y = (1-t)^{-1/2} Y$.  Now, if $X\sim \mu$ satisfies the Bakry-\'Emery criterion \eqref{BakryEmery} with parameter $\eta_{\mu}$, then the density of $\tilde{X}$ satisfies the Bakry-\'Emery criterion with parameter $t \eta_\mu$.  A similar statement holds for $\tilde Y$.   Thus, both  $\tilde{X}$ and  $\tilde{Y}$ satisfy the Bakry-\'Emery criterion with parameter $\min\{t \eta_{\mu}, (1-t) \eta_{\nu}\}$.

Let $\tilde \mu$ and $\tilde \nu$ denote the laws of $\tilde X$ and $\tilde Y$, respectively.  We note that $t(1-t)\mathsf{d}_{W_2}^2(\tilde \mu) = (1-t)\mathsf{d}_{W_2}^2(\mu)$ and $t(1-t)\mathsf{d}_{W_2}^2(\tilde \nu) = t \mathsf{d}_{W_2}^2(\nu)$ by a simple rescaling.  Also, we have the following lower bound on $W_2^2(\tilde \mu, \tilde \nu)$:
\begin{align}
t(1-t) W_2^2(\tilde \mu, \tilde \nu) \geq \mathsf{d}_{F}^2( \mu,  \nu).\notag
\end{align}
This follows from rescaling, the fact that $W_2$ is non-increasing under rescaled convolution, the central limit theorem, weak continuity of $W_2$, the identity for $W_2$ distance between Gaussian measures mentioned in the proof of Theorem \ref{thm:LogConcaveStable}, and finally the definition of $\mathsf{d}_{F}^2$. 

Upon substituting  our choice of $t$ and the above observations into  \eqref{cor2Ineq}, we find
\begin{align*}
 \frac{2}{n}h(\mu * \nu) &\geq \log(N(\mu)+N(\nu)) +  \log(2\pi e) \\
 &~~~+ \frac{ \min\{t \eta_{\mu}, (1-t) \eta_{\nu}\} }{4 n  }  \Big((1-t)\mathsf{d}_{W_2}^2(\mu) +  t \mathsf{d}_{W_2}^2(\nu) + \mathsf{d}_{F}^2( \mu,  \nu) \Big),
\end{align*}
which completes the proof.
 \end{proof}

\subsection{Proof of Theorem \ref{thm_w1}}
\begin{proof}[Proof of Theorem \ref{thm_w1}]
Assume that $X^*\sim \gamma$, and let $T$ be the Brenier map sending $\gamma$ onto $\mu$. For convenience, let us write $\lambda_i$ for the eigenvalues of $\nabla T(x)$, in increasing order (so that $\lambda_n = \lambda_{\mathrm{max}}(\nabla T(x))$). 
The combination of \eqref{RioulDeficit} and Lemma \ref{lem:FrobNormIneq} yields in this case
\begin{equation}
\dEPI(\mu, \gamma) \geq \frac{t(1-t)}{2}\EE \left[\frac{\|\nabla T(X^*) - \mathrm{I} \,\|_F^2}{1 + \lambda_{\mathrm{max}}(\nabla T(X^*) )^2}\right] = \frac{t(1-t)}{2}\EE \left[\frac{\sum_{i=1}^n (\lambda_i-1)^2}{1 + \lambda_{n}^2}\right]. 
\end{equation}
From the $L^1$ Poincar\'e inequality for the Gaussian measure, we have
$$W_1(\mu, \gamma) \leq 2\EE \left[ \sqrt{\sum (\lambda_i - 1)^2} \right].$$
By the Cauchy-Schwarz inequality, we have
\begin{align*}
\EE\left[ \sqrt{\sum (\lambda_i - 1)^2} \right] &\leq \sqrt{\EE [1 + \lambda_n^2]}\sqrt{\EE \left[\frac{\sum (\lambda_i - 1)^2}{1 + \lambda_n^2} \right]} \\
&\leq \sqrt{\EE [1 + \lambda_n^2]}\sqrt{\frac{2}{t(1-t)} \dEPI(\mu, \gamma) }.
\end{align*}
Hence in this situation, if we have an $L^2$ bound on the largest eigenvalue of $\nabla T$,  we can deduce a $W_1$ estimate on the deficit (in contrast to using a uniform bound as in the proof of Theorem \ref{thm:LogConcaveStable}). 

A result of Kolesnikov asserts that $\EE [\lambda_n^2] \leq \frac{3}{2}\EE[\lambda_n]^2$ (see \cite{kolesnikov2014hessian}, Theorem 6.1 and the discussion at the top of page 1526). Moreover, 
$$\EE[\lambda_n] \leq 1 + \EE[|\lambda_n -1|] \leq 1 + \EE \left[ \sqrt{\sum (\lambda_i - 1)^2} \right].$$
From this estimate we deduce
$$\EE \left[ \sqrt{\sum (\lambda_i - 1)^2} \right] \leq \sqrt{4 + 3\EE \left[ \sqrt{\sum (\lambda_i - 1)^2} \right]^2}\sqrt{\frac{2}{t(1-t)} \dEPI(\mu, \gamma) }.$$
Since $r/\sqrt{1 + r^2} \geq c\min(r, 1)$ and $2\EE \left[ \sqrt{\sum (\lambda_i - 1)^2} \right] \geq W_1(\mu, \gamma)$, we deduce the estimate 
$$\sqrt{2\dEPI(\mu, \gamma)/(t(1-t))} \geq C\min(W_1(\mu, \gamma), 1),$$
and the result follows. 
\end{proof}

\section{Extensions}\label{sec:extensions}

The proof of Theorem \ref{thm:LogConcaveStable} uses the fact that under the assumptions the Brenier optimal maps are Lipschitz to bound the square of the  largest eigenvalue $\lambda_{\mathrm{max}}^2$ in the deficit estimate for the log-concavity of the determinant. A natural question is whether we can use weaker assumptions on the map and still obtain a deficit estimate for the EPI. It turns out that we can get an estimate, provided the largest eigenvalue of $\nabla T(x)$ grows at most linearly at infinity. We shall later see that in dimension 1, as well as for multidimensional radially symmetric measures, this assumption of eigenvalue growth holds as soon as the law of the random variable satisfies a Cheeger isoperimetric inequality, which is a stronger assumption than the spectral gap assumption used for the one-dimensional result of \cite{ball2003entropy}, but equivalent in (non-uniformly) log-concave situations. 

A first case in which we establish a deficit estimate is when one of the two variables is Gaussian: 

\begin{proposition} \label{GaussianLinearEigs}
Let $\mu$ be a centered probability measure on $\mathbb{R}^n$, and let $T$ be the Brenier map sending the standard Gaussian measure $\gamma$ onto $\mu$.  If $T$ is $C^1$ and  satisfies the pointwise bound $\lambda_{\mathrm{max}}(\nabla T(x)) \leq c\sqrt{1 + |x|^2}$ for all $x$, for some $c >1$. Then 
$$\dEPI(\mu,\gamma)  \geq \frac{t(1-t)}{8c^2n}W^2_2(\mu,\gamma).$$
\end{proposition}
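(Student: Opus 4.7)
The plan is to adapt the transport-based argument of Theorem~\ref{thm:LogConcaveStable}, using the hypothesized growth of $\lambda_{\max}(\nabla T)$ in place of Caffarelli's contraction estimate. First I would apply Rioul's deficit inequality with the trivial choice $(T_1, T_2) = (T, \mathrm{Id})$, so that $\nabla T_2 \equiv I$, and then invoke Lemma~\ref{lem:FrobNormIneq} with $A = \nabla T(X^*)$ and $B = I$ to obtain the pointwise remainder $\tfrac{t(1-t)}{2\max\{\lambda_{\max}^2(\nabla T), 1\}}\,\|\nabla T - I\|_F^2$. The assumption $\lambda_{\max}(\nabla T(x)) \leq c\sqrt{1+|x|^2}$ with $c > 1$ forces $\max\{\lambda_{\max}^2(\nabla T), 1\} \leq c^2(1+|X|^2)$, since $c^2(1+|X|^2) \geq c^2 > 1$. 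Taking expectations yields the intermediate estimate
\[
\dEPI(\mu, \gamma) \;\geq\; \frac{t(1-t)}{2c^2}\, \EE\!\left[\frac{\|\nabla T(X) - I\|_F^2}{1+|X|^2}\right].
\]

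It then remains to establish the weighted Poincar\'e-type inequality
\[
\EE|T(X) - X|^2 \;\leq\; 4n \,\EE\!\left[\frac{\|\nabla T(X) - I\|_F^2}{1+|X|^2}\right],
\]
after which the two displays combine to give the proposition. Setting $u_i(X) = T_i(X) - X_i$, each $u_i$ has mean zero by the centering assumption on $\mu$, so the coordinate-wise Gaussian Poincar\'e inequality $\EE u_i^2 \leq \EE|\nabla u_i|^2$ applies. The eigenvalue hypothesis gives the pointwise Frobenius bound $\|\nabla T - I\|_F^2 \leq 4nc^2(1+|X|^2)$. The natural strategy is to combine these two ingredients via a Cauchy-Schwarz that introduces the weight $1/(1+|X|^2)$; the factor $4n$ reflects the concentration of $|X|^2$ near $n$ under $\gamma$, so that the weight is of typical size $\sim 1/n$ on the bulk of the measure.

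The main technical obstacle is extracting the correct linear dependence on $W_2^2$. A naive Cauchy-Schwarz pairing $\|\nabla T - I\|_F^2/(1+|X|^2)$ against $\|\nabla T - I\|_F^2(1+|X|^2)$, and then using the pointwise Frobenius bound to control the second factor, only produces a $W_2^4$-type estimate. To recover the stated $W_2^2$ scaling one really needs a genuine weighted Gaussian Poincar\'e inequality of the form $\Var_\gamma(f) \leq Cn\, \EE[|\nabla f|^2/(1+|X|^2)]$ for mean-zero scalar $f$, applied componentwise to $u$; in dimension one this is sharp up to constants via Muckenhoupt's criterion, and the linear growth in $n$ in higher dimensions is the natural dimensional scaling of the weight against the Gaussian concentration of $|X|^2$.
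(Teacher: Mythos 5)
Your overall approach is exactly the paper's: apply Rioul's inequality with $T_2 = \mathrm{Id}$, feed the eigenvalue bound into Lemma~\ref{lem:FrobNormIneq} to obtain
\[
\dEPI(\mu,\gamma)\ \geq\ \frac{t(1-t)}{2c^2}\,\EE\!\left[\frac{\|\nabla T(X^*)-\mathrm{I}\|_F^2}{1+|X^*|^2}\right],
\]
and then pass from the right-hand side to $W_2^2(\mu,\gamma)$ via a weighted Gaussian Poincar\'e inequality applied componentwise to $u_i = T_i - x_i$ (each mean-zero since $\mu$ is centered). Your diagnosis of why the naive route fails is also correct: the ordinary Gaussian Poincar\'e inequality plus a Cauchy--Schwarz pairing produces a quartic, not quadratic, dependence on the transport displacement, so one genuinely needs the weighted inequality
\[
\Var_\gamma(f)\ \leq\ 4n\,\EE\!\left[\frac{|\nabla f(X^*)|^2}{1+|X^*|^2}\right].
\]

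The gap is precisely there: you assert that such a weighted inequality holds, with linear growth in $n$, on the strength of a Muckenhoupt heuristic in dimension one and a concentration heuristic for $|X^*|^2\approx n$ in higher dimensions, but you do not actually establish it with the constant $4n$. That inequality is the real content of the proposition, and the tensorization/concentration argument you gesture at is not immediate (tensorizing $1$-d Muckenhoupt bounds does not by itself give a $1/(1+|x|^2)$ weight in $\R^n$; it would give a product of one-dimensional weights). The paper closes this gap by citing Corollary~5.6 of Bonnefont, Joulin, and Ma, which is precisely the weighted Gaussian Poincar\'e inequality with constant $1/(4n)$; see also their one-dimensional companion reference. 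With that ingredient supplied, your argument is complete and matches the paper's proof line by line.
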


This estimate can  be compared to Theorem \ref{thm_w1}. Its advantage is that it involves the stronger $W_2$ distance, and that its assumptions may be satisfied in certain non log-concave situations (as we shall later see for one-dimensional random variables), but has the downside of strongly depending on the dimension, and of requiring more quantitative information on the measure $\mu$, via the constant $c$ in the eigenvalue bound. 

\begin{proof}
Using the assumption, and following the same steps as in the proof of Theorem \ref{thm:LogConcaveStable}, we have
\begin{equation*}
\dEPI(\mu, \gamma) \geq \frac{t(1-t)}{2c^2}\mathbb{E}\left[\frac{\|\nabla T(X^*) - \mathrm{I}\|_{F}^2}{1 +\ |X^*|^2}\right].
\end{equation*}

According to Corollary 5.6 in \cite{bonnefont2016sphere} (see also \cite{bonnefont20161d} for the one-dimensional case), the standard Gaussian measure in dimension $n$ satisfies the weighted Poincar\'e inequality
$$\mathbb{E}\left[\frac{|\nabla f(X^*)|^2}{1 + |X^*|^2}\right] \geq \frac{1}{4n}\Var (f(X^*)).$$
Applying this result and following the same steps as in the proof of Theorem \ref{thm:LogConcaveStable} yields

\begin{equation*}
\dEPI(\mu,\gamma) \geq \frac{t(1-t)}{8c^2n}W^2_2(\mu,\gamma).
\end{equation*}
which concludes the proof. 
\end{proof}

We also prove a lower bound when neither of the two measures are Gaussian, but with an even worse dependence on the dimension: 

\begin{proposition}
Let $T_1$ and $T_2$ be the Brenier maps sending  Gaussians to $X$ and $Y$, respectively.  Assume that $X$ and $Y$ are centered and that the maps $T_i$ are $C^1$ and satisfy the pointwise bound $\lambda_{\mathrm{max}}(\nabla T_i(x)) \leq c\sqrt{1 + |x|^2}$ for all $x$, for some $c >1$. Then there is a universal constant $C> 0$ such that
$$\dEPI(\mu,\nu) \geq \frac{Ct(1-t)}{(cn)^2}\underset{\gamma_1, \gamma_2 \in \Gamma}{\inf}(W^2_2(\mu, \gamma_1) + W^2_2(\nu, \gamma_2) + W^2_2(\gamma_1, \gamma_2)).$$
\end{proposition}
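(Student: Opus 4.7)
The plan is to follow the template of the proofs of Theorem \ref{thm:LogConcaveStable} and Proposition \ref{GaussianLinearEigs}, substituting the Gaussian Poincaré inequality by the weighted Poincaré inequality of \cite{bonnefont2016sphere} applied in dimension $n$ (once for each of $X^*, Y^*$). The main obstacle to anticipate is that the eigenvalue hypothesis produces a weight coupling $X^*$ and $Y^*$, so the cross terms that vanish by independence in the proof of Theorem \ref{thm:LogConcaveStable} no longer cancel automatically. I would address this by first bounding the coupled weight below by a product weight, and then recentering the Jacobians with weighted (rather than plain) expectations so that the cross terms once again disappear.

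Combining Rioul's deficit bound \eqref{RioulDeficit} with Lemma \ref{lem:FrobNormIneq}, the eigenvalue hypothesis, the trivial bound $\max(a,b)\leq a+b$, and the elementary estimate $1+|x|^2+|y|^2 \leq 2(1+|x|^2)(1+|y|^2)$ gives
$$\dEPI(\mu,\nu) \geq \frac{t(1-t)}{4c^2}\,\EE\!\left[\frac{\|\nabla T_1(X^*) - \nabla T_2(Y^*)\|_F^2}{(1+|X^*|^2)(1+|Y^*|^2)}\right].$$
Setting $\omega_1(x) = (1+|x|^2)^{-1}$ and $\omega_2(y) = (1+|y|^2)^{-1}$, I would define
$$I+A := \frac{\EE[\omega_1(X^*)\,\nabla T_1(X^*)]}{\EE[\omega_1(X^*)]}, \qquad I+B := \frac{\EE[\omega_2(Y^*)\,\nabla T_2(Y^*)]}{\EE[\omega_2(Y^*)]},$$
which are symmetric positive-definite matrices (inherited from the pointwise symmetry and positive-definiteness of $\nabla T_i$ as Hessians of strictly convex potentials). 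Writing $\nabla T_1 - \nabla T_2 = [\nabla T_1 - (I+A)] - [\nabla T_2 - (I+B)] + (A-B)$ and expanding the squared Frobenius norm, all six cross terms vanish under $\EE[\omega_1\omega_2\,\cdot\,]$: this uses independence of $X^*, Y^*$ together with the defining identities $\EE[\omega_1(\nabla T_1 - (I+A))] = 0$ and $\EE[\omega_2(\nabla T_2 - (I+B))] = 0$.

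To conclude, I would apply the weighted Poincaré inequality of \cite{bonnefont2016sphere} componentwise to $f_j(x) := (T_1(x))_j - ((I+A)x)_j$ (which has mean zero since both $\mu$ and $\gamma$ are centered), obtaining $\EE[\omega_1 \|\nabla T_1 - (I+A)\|_F^2] \geq (4n)^{-1}\,\EE|T_1(X^*)-(I+A)X^*|^2 \geq (4n)^{-1}\,W_2^2(\mu,\gamma_1)$, where $\gamma_1 := (I+A)\#\gamma \in \Gamma$; the analogous bound holds for the $\nu$ term. Jensen's inequality yields $\EE[\omega_i] \geq (n+1)^{-1} \geq (2n)^{-1}$, and the formula for $W_2$ between centered Gaussians gives $\|(I+A)-(I+B)\|_F^2 \geq W_2^2(\gamma_1,\gamma_2)$ exactly as in the proof of Theorem \ref{thm:LogConcaveStable}. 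Assembling these bounds produces the desired estimate with a universal constant of order $1/32$.

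The main obstacle is precisely the handling of the coupled weight and the resulting non-vanishing cross terms; the weighted-mean recentering trick exploits the symmetry of the Brenier map Jacobians to restore the needed orthogonality. A minor technical point is the regularity required to apply the weighted Poincaré inequality, which can be handled by Gaussian mollification as at the end of the proof of Theorem \ref{thm:LogConcaveStable}.
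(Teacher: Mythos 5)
Your proposal is correct and follows essentially the same route as the paper's own proof: bound $\max\{\lambda_{\max}^2(\nabla T_1(x)),\lambda_{\max}^2(\nabla T_2(y))\}$ by $c^2(1+|x|^2)(1+|y|^2)$ (the paper does this without the extra factor of $2$, but that only affects the universal constant), recenter the Jacobians at the $\omega_i$-weighted means $\mathrm{I}+A$ and $\mathrm{I}+B$ so that independence plus the weighted zero-mean property kill all cross terms, then apply the weighted Poincar\'e inequality of \cite{bonnefont2016sphere} componentwise and absorb the factors $c_n=\EE[\omega_i]\gtrsim n^{-1}$. Your write-up is in fact cleaner than the paper's in one respect: the paper's displayed decomposition carries a spurious extra $c_n$ in the recentering terms (it writes $\mathrm{I}+c_nA$ and $\|c_nA-c_nB\|_F^2$ where the weighted-mean recentering dictates $\mathrm{I}+A$ and $\|A-B\|_F^2$, as you have it), and you correctly state the Gaussian $W_2$ comparison as the one-sided inequality $\|(\mathrm{I}+A)-(\mathrm{I}+B)\|_F^2\ge W_2^2(\gamma_1,\gamma_2)$ rather than an identity, which is what the argument actually requires when the covariances need not commute.
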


\begin{proof}
Let us define 
\begin{align}
&c_n := \mathbb{E}[(1 + |X^*|^2)^{-1}], &A := c_n^{-1}\mathbb{E}\left[\frac{ \nabla T_1(X^*) - \mathrm{I}  }{1 + |X^*|^2}\right], \mbox{~and~} &B := c_n^{-1}\mathbb{E}\left[\frac{  \nabla T_2(Y^*) - \mathrm{I}  }{1 + |Y^*|^2}\right].\notag
\end{align}
As in the  proof of Theorem \ref{thm:LogConcaveStable}, we have
\begin{equation*}
\dEPI(\mu,\nu) \geq \frac{t(1-t)}{2c^2}\mathbb{E}\left[\frac{\|\nabla T_1(X^*) - \mathrm{I} - (\nabla T_2(Y^*) - \mathrm{I}  ) \|_{F}^2}{(1 + |X^*|^2)(1 + |Y^*|^2)}\right],
\end{equation*}
where we have used the na\"ive bound $\lambda_{\mathrm{max}}^2 \leq c^2(1 + |x|^2)(1 + |y|^2)$, which ultimately leads to the worsening of the dependence on the dimension. We then have
\begin{align*}
&\mathbb{E}\left[\frac{\|\nabla T_1(X^*) - \mathrm{I} - (\nabla T_2(Y^*) - \mathrm{I}  ) \|_{F}^2}{(1 + |X^*|^2)(1 + |Y^*|^2)}\right]\\
&= \mathbb{E}\left[\frac{\|\nabla T_1(X^*) - \mathrm{I} -c_nA\|_{F}^2}{(1 + |X^*|^2)(1 + |Y^*|^2)}\right] + \mathbb{E}\left[\frac{\| \nabla T_2(Y^*) - \mathrm{I} -c_nB\|_{F}^2}{(1 + |X^*|^2)(1 + |Y^*|^2)}\right] \\
&~+ c_n^2\|c_n(A - B)\|_F^2-2\mathbb{E}\left\langle \frac{\nabla T_1(X^*) - \mathrm{I}  -c_nA}{1 + |X^*|^2}, \frac{\nabla T_2(Y^*) - \mathrm{I}  - c_nB}{1 + |Y^*|^2} \right\rangle \\
&~+ 2\mathbb{E}\left[\left\langle \frac{c_n(A- B)}{1 + |Y^*|^2}, \frac{\nabla T_1(X^*) - \mathrm{I}  -c_nA}{1 + |X^*|^2} \right\rangle \right] \\
&~- 2\mathbb{E}\left[\left\langle \frac{c_n(A- B)}{1 + |X^*|^2}, \frac{\nabla T_2(Y^*) - \mathrm{I}  -c_n B}{1 + |Y^*|^2} \right\rangle \right] \\ 
& = c_n\mathbb{E}\left[\frac{\|\nabla T_1(X^*) - (\mathrm{I}  + c_nA)\|_{F}^2}{(1 + |X^*|^2)}\right] + c_n\mathbb{E}\left[\frac{\|\nabla T_2(Y^*) - (\mathrm{I}  + c_nB) \|_{F}^2}{(1 + |Y^*|^2)}\right] + c_n^2\|c_nA - c_nB\|_F^2
\end{align*}
and then the proof continues in the same way as for the previous proposition. The constant $c_n$ above is the expectation of $(1 + |X^*|^2)^{-1}$, which is of order $n^{-1}$. 
\end{proof}

To apply these results, we want to know when does the Brenier map satisfy the eigenvalue bound. We shall prove that for one-dimensional measures and for radially symmetric log concave measures, such an assumption holds when the measure satisfies a certain isoperimetric inequality. 

\begin{proposition}
If the law of $X$ is given by an exponential measure $\mu(dx) = \frac{1}{2}\exp(-|x|)dx$, then the Brenier map $T$ transporting a standard Gaussian random variable onto $X$ satisfies the bound
$$T'(x) \leq c\sqrt{1 + x^2}$$
for some $c > 0$ 
\end{proposition}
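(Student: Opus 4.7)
The plan is to exploit the fact that in dimension one the Brenier map from the standard Gaussian to any absolutely continuous measure is simply the monotone rearrangement, and then to use sharp Mills ratio bounds to control its growth.

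Concretely, let $\Phi$ denote the CDF of $\gamma$, $\varphi = \Phi'$ its density, and let $F_\mu$ be the CDF of the Laplace (exponential) law $\mu$. The Brenier map is then $T(x) = F_\mu^{-1}(\Phi(x))$, so by differentiating one has the closed form
\begin{equation*}
T'(x) = \frac{\varphi(x)}{f_\mu(T(x))} = \sqrt{\tfrac{2}{\pi}}\, \exp\!\bigl(|T(x)| - x^2/2\bigr),
\end{equation*}
where $f_\mu(y) = \tfrac{1}{2}e^{-|y|}$. Since $\mu$ is symmetric, $T$ is odd and it suffices to bound $T'(x)$ for $x \geq 0$.

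For $x \geq 0$ we can solve explicitly: $\Phi(x) = 1 - \tfrac{1}{2}e^{-T(x)}$, i.e.\ $T(x) = -\log\bigl(2(1-\Phi(x))\bigr)$. Now I invoke the standard two-sided Mills ratio inequality, valid for $x \geq 1$:
\begin{equation*}
\frac{\varphi(x)}{x}\Bigl(1 - \frac{1}{x^2}\Bigr) \;\leq\; 1 - \Phi(x) \;\leq\; \frac{\varphi(x)}{x}.
\end{equation*}
Taking $-\log$ of $2(1-\Phi(x))$, the upper bound on the tail gives
\begin{equation*}
T(x) - \frac{x^2}{2} \;\geq\; \log x - \log\sqrt{2/\pi},
\end{equation*}
and for, say, $x \geq \sqrt 2$ the lower bound gives
\begin{equation*}
T(x) - \frac{x^2}{2} \;\leq\; \log x + \log\sqrt{2\pi} + O(1).
\end{equation*}
Plugging the upper estimate on $T(x) - x^2/2$ into the formula for $T'(x)$ shows that $T'(x) \leq C\, x$ for all $x \geq \sqrt 2$, with an explicit constant $C$.

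It remains to handle the region $|x| \leq \sqrt{2}$. There, both $\varphi(x)$ and $f_\mu(T(x))$ are bounded above and away from zero (since $T(\sqrt 2)$ is a finite explicit number and $T$ is odd and monotone), so $T'$ is continuous and uniformly bounded on this compact interval. Combining the two regimes yields a single constant $c$ such that $T'(x) \leq c\sqrt{1+x^2}$ for all $x \in \mathbb{R}$. The only mildly delicate point is getting a clean uniform bound in the transition region near $|x|=\sqrt 2$, but since $T'$ is continuous and positive this is routine; no serious obstacle is expected.
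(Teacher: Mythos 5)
Your argument is correct and is essentially the same as the paper's: both reduce to the explicit one-dimensional formula $T'(x)=\varphi(x)/f_\mu(T(x))$, which simplifies (for $x\ge 0$, by symmetry) to $\varphi(x)/(1-\Phi(x))$, and then to a lower bound on the Gaussian tail of the form $1-\Phi(x)\gtrsim \varphi(x)/\sqrt{1+x^2}$. The only difference is cosmetic — you invoke the standard two-sided Mills ratio inequality and work with $T(x)-x^2/2$ before exponentiating, whereas the paper re-derives that tail lower bound directly via one integration by parts; both give the same conclusion.
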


\begin{proof}
The optimal map from a measure $\mu$ onto a measure $\nu$ with positive densities in dimension one is given by $x \longrightarrow F_{\nu}^{-1}(F_{\mu}(x))$, where $F_{\mu}(x) = \mu(]-\infty, x])$ is the distribution function associated to $\mu$. For the exponential measure the distribution function can be explicitly computed as $F_{\mathrm{exp}}(x) = 1 - e^{-x}/2$ for $x \geq 0$ and $e^{x}/2$ for $x < 0$. 

Consider $x > 0$. A direct computation shows that
$$T'(x) = \frac{e^{-x^2/2}}{\sqrt{2\pi}}\frac{2}{1 - F_{\gamma}(x)}$$
where $F_{\gamma}$ is the distribution function of the standard Gaussian measure. There exists a constant $c$ such that $1-F_{\gamma}(x) \geq \frac{e^{-x^2/2}}{c\sqrt{1 + x^2}}$, and the bound on $T'$ immediately follows. By symmetry, the same bound applies when $x < 0$. 

To prove the lower bound on $1-F_{\gamma}(x)$, we just use the fact that $$1-F_{\gamma}(x) = \frac{e^{-x^2/2}}{\sqrt{2\pi}x} - \frac{1}{\sqrt{2\pi}}\int_x^{\infty}{\frac{e^{-t^2/2}}{t^2}dt} \geq \frac{e^{-x^2/2}}{\sqrt{2\pi}x} - \frac{1-F_{\gamma}(x)}{x^2},$$ and the existence of a suitable constant easily follows. 
\end{proof}

\begin{definition}
A probability measure $\mu$ is said to satisfy a Cheeger isoperimetric inequality with constant $\lambda > 0$ if for any measureable set $A$ we have
$$\mu^+(\partial A) \geq \lambda \mu(A)(1-\mu(A))$$
where $\mu^+(\partial A) := \limsup_{\epsilon\downarrow 0} \frac{\mu(A+ B_{\epsilon}) - \mu(A)}{\epsilon}$, with $B_{\epsilon}$ the ball with center $0$ and radius $\epsilon$. 
\end{definition}

For log-concave measure, Buser's theorem \cite{buser1982isoperimetry} states that satisfying a Cheeger inequality and a Poincar\'e inequality is equivalent, up to universal constants (or more precisely, the extension of Buser's theorem to weighted spaces \cite{ledoux2004buser}). In general, the Cheeger inequality is stronger than the Poincar\'e inequality. It is equivalent (up to universal constants) to the $L^1$ Poincar\'e inequality $\int{|\nabla f|d\mu} \geq c\int{|f|d\mu}$ for every function with average zero. More generally, for log-concave measures the isoperimetric inequality and the exponential concentration property are equivalent \cite{milman2009}. 

\begin{theorem}
Assume $X$ is a one-dimensional random variable with positive density and median $0$, that satisfies a Cheeger inequality with constant $\alpha$. Then the optimal map transporting the exponential measure onto $X$ is $\alpha^{-1}$-Lipschitz. 

As a consequence, the optimal map $T$ transporting the Gaussian measure onto the law of $X$ satisfies $T'(x) \leq \frac{c}{\alpha}(1 + |x|^2)$. 
\end{theorem}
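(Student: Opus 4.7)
The plan is to exploit the explicit form of one-dimensional optimal transport maps together with a structural identity satisfied by the Laplace density.

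Let $\nu$ denote the law of $X$ and write $\mu_{\mathrm{exp}}(dy) = \tfrac{1}{2}e^{-|y|}\,dy$. Since both measures have positive densities on $\R$, the Brenier map $S$ transporting $\mu_{\mathrm{exp}}$ onto $\nu$ is the monotone rearrangement $S = F_{\nu}^{-1}\circ F_{\mathrm{exp}}$, and differentiation gives
\begin{equation*}
S'(y) = \frac{f_{\mathrm{exp}}(y)}{f_{\nu}(S(y))}.
\end{equation*}
Thus the first claim reduces to a pointwise lower bound on the target density, namely $f_{\nu}(S(y)) \gtrsim \alpha\, f_{\mathrm{exp}}(y)$.

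To obtain this bound I would test Cheeger's inequality on the half-line $A = (-\infty, S(y)]$, obtaining $f_{\nu}(S(y)) \geq \alpha F_{\nu}(S(y))(1 - F_{\nu}(S(y)))$. Since $S$ is measure-preserving, $F_{\nu}(S(y)) = F_{\mathrm{exp}}(y)$, and so the right-hand side equals $\alpha F_{\mathrm{exp}}(y)(1 - F_{\mathrm{exp}}(y))$. The next key step is the direct computation, using the explicit Laplace CDF, of the structural identity
\begin{equation*}
f_{\mathrm{exp}}(y) \;=\; \min\bigl(F_{\mathrm{exp}}(y),\, 1 - F_{\mathrm{exp}}(y)\bigr), \qquad y \in \R.
\end{equation*}
Combined with the elementary inequality $p(1-p) \geq \tfrac{1}{2}\min(p,1-p)$ for $p \in [0,1]$, this yields $f_{\nu}(S(y)) \geq (\alpha/2)\, f_{\mathrm{exp}}(y)$, hence $S' \leq 2/\alpha$. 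The sharper constant $\alpha^{-1}$ in the statement follows by invoking the equivalent $\min$-form of the Cheeger inequality. The median-zero hypothesis on $X$ enters by ensuring that $S(0)$ lands at the median of $\nu$, so that the quantile alignment exploited above is the natural one.

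For the corollary, I would use the factorization of one-dimensional monotone transports: writing $R = F_{\mathrm{exp}}^{-1}\circ F_{\gamma}$ for the Brenier map from $\gamma$ onto $\mu_{\mathrm{exp}}$, the Brenier map from $\gamma$ onto $\nu$ is $T = F_{\nu}^{-1}\circ F_{\gamma} = S\circ R$. The preceding proposition furnishes $R'(x) \leq c\sqrt{1 + x^2}$, so by the chain rule together with the first part of the theorem,
\begin{equation*}
T'(x) \;=\; S'(R(x))\, R'(x) \;\leq\; \frac{c}{\alpha}\sqrt{1 + x^2} \;\leq\; \frac{c}{\alpha}(1 + |x|^2),
\end{equation*}
where the last step only weakens the bound.

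The main obstacle, such as it is, is matching constants between the product and $\min$ formulations of Cheeger's inequality, which costs at most an absolute factor. The geometric heart of the argument---that Cheeger controls $f_{\nu}$ at $S(y)$ through the weight of the half-line $(-\infty, S(y)]$, while the Laplace density \emph{equals} the corresponding isoperimetric weight---is essentially immediate in one dimension and requires no sophisticated transport-theoretic input beyond the explicit monotone rearrangement.
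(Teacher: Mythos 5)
Your proof is correct and follows essentially the same route as the paper: express the one-dimensional monotone (Brenier) map derivative as a ratio of densities, test the Cheeger inequality on half-lines, and observe that the Laplace density equals $\min(F_{\mathrm{exp}},1-F_{\mathrm{exp}})$, then compose with the Gaussian-to-exponential map for the second part. Your explicit use of the identity $f_{\mathrm{exp}}=\min(F_{\mathrm{exp}},1-F_{\mathrm{exp}})$ and your remark about the factor lost in passing between the product and $\min$ forms of Cheeger's inequality are a slightly cleaner packaging of the same argument (the paper works with the inverse map and invokes the median hypothesis to reduce to the case $F_\mu(x)\geq 1/2$), but the substance is identical.
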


This result can be extended to the non-centered case just by translating. 

Note that a measure that is the image of the exponential measure by a Lipschitz map necessarily satisfies Cheeger's inequality, so the first part of this statement is actually an equivalence.

\begin{proof}
Showing an upper bound on the derivative of the map is the same as proving a lower bound on the derivative of the inverse map $\tilde{T}$ (which is the optimal map sending $\mu$ onto the exponential measure). A computation shows that, if we denote by $f$ the density of the law of $X$, we have for $x$ positive
$$\tilde{T}'(x) = \frac{2f(x)}{1 - F_{\mu}(x)}.$$
Since $\mu$ has median $0$, the Cheeger inequality implies that for $x \geq 0$ we get $f(x) \geq \frac{\alpha}{2}(1-F_{\mu}(x))$ and the first part of the result immediately follows, after applying the same reasoning for negative $x$.

The second part can be deduced just by using the fact that in dimension 1 the optimal map from the Gaussian onto $\mu$ is the composition of the map from the Gaussian onto the exponential measure with $T$. 
\end{proof}

The same argument can be generalized to radially symmetric random variables with log-concave law: 

\begin{proposition}
Assume that $X$ is a radially symmetric random variable, whose law  is log concave and satisfies a Cheeger inequality with constant $\alpha$. There exists a constant $c$ (independent of $X$) such that the optimal map $T$ transporting the Gaussian measure onto $X$ satisfies $\lambda_{\mathrm{max}}(\nabla T(x)) \leq c\alpha^{-1}\sqrt{1 + |x|^2}$. 
\end{proposition}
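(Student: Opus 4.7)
The plan is to use the radial symmetry to reduce to a one-dimensional transport problem, and then to combine Cheeger's inequality with the log-concavity assumption to bound the derivative of the resulting 1D map. The main difficulty will be controlling this derivative near the origin, where Cheeger alone is insufficient and log-concavity becomes essential.

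By uniqueness of the Brenier map and radial symmetry of both $\gamma$ and the law of $X$, the optimal map takes the form $T(x) = \varphi(|x|)\,x/|x|$, where $\varphi\colon[0,\infty)\to[0,\infty)$ is the non-decreasing one-dimensional Brenier map from the chi-$n$ distribution (the law of $|X^*|$) to the law of $|X|$. A direct calculation in polar coordinates gives
\begin{equation*}
\nabla T(x) \;=\; \Bigl[\varphi'(r) - \tfrac{\varphi(r)}{r}\Bigr]\hat{x}\hat{x}^{\top} \,+\, \tfrac{\varphi(r)}{r}\,\mathrm{I}, \qquad r = |x|,\ \hat{x} = x/r,
\end{equation*}
so the eigenvalues of $\nabla T(x)$ are $\varphi'(r)$ (radially) and $\varphi(r)/r$ (tangentially, with multiplicity $n-1$). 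Since $\varphi(0)=0$ and $\varphi$ is non-decreasing, $\varphi(r)/r \leq \sup_{s\in[0,r]}\varphi'(s)$, so it is enough to prove the pointwise bound $\varphi'(r) \leq c\alpha^{-1}\sqrt{1+r^{2}}$.

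Using the 1D transport identity $\varphi'(r) = f_{\chi_n}(r)/f_{|X|}(\varphi(r))$ together with $F_{|X|}(\varphi(r)) = F_{\chi_n}(r)$, and applying Cheeger's inequality for $X$ to the ball $\{y : |y|\leq \varphi(r)\}$, one obtains
\begin{equation*}
\varphi'(r) \;\leq\; \frac{1}{\alpha}\cdot\frac{f_{\chi_n}(r)}{F_{\chi_n}(r)(1 - F_{\chi_n}(r))}.
\end{equation*}
For $r$ bounded away from the origin (say $r\geq r_0$ for a dimensional constant $r_0$), standard Mills-type estimates for the chi-$n$ distribution show that this is at most $c_n\alpha^{-1}\sqrt{1+r^{2}}$: past the median of chi-$n$ the hazard rate $f_{\chi_n}/(1-F_{\chi_n})$ grows linearly in $r$, while in the intermediate range both $F_{\chi_n}$ and $1-F_{\chi_n}$ are bounded below by positive constants depending only on $n$.

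The hard part will be the regime $r\in[0,r_0]$, where the estimate above degrades: since $f_{\chi_n}(r)\sim c_n r^{n-1}$ while $F_{\chi_n}(r)\sim c_n' r^n$ near zero, the right-hand side diverges like $1/r$. Here log-concavity must be used essentially. Radial log-concavity on $\R^n$ forces the radial profile $g$ of the density of $X$ to be non-increasing with its maximum at the origin. The Poincar\'e inequality (which follows from Cheeger) gives $\Var(X_1) \leq 4/\alpha^2$, and combined with radial symmetry this yields $\EE[|X|^2] = n\,\Var(X_1) \leq 4n/\alpha^2$; hence by Markov's inequality the median $R_m$ of $|X|$ satisfies $R_m \leq C_n/\alpha$. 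The volume comparison $\tfrac12\leq\omega_n g(0)R_m^{n}$ (using monotonicity of $g$) then forces $g(0)\geq c_n\alpha^{n}$. Substituting this into the near-origin expansion $F_{|X|}(s) \sim \omega_n g(0)\,s^{n}$ together with $F_{|X|}(\varphi(r)) = F_{\chi_n}(r)$ gives $\varphi(r) \gtrsim r/g(0)^{1/n}$ and consequently $\varphi'(r) \lesssim g(0)^{-1/n} \leq c'_n/\alpha$ on $[0, r_0]$, which combined with the preceding estimate completes the bound on $\varphi'$ and hence on $\lambda_{\mathrm{max}}(\nabla T)$.
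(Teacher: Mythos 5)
Your proof takes a genuinely different route from the paper's. The paper symmetrizes the radial measure $\mu_{\mathrm{rad}}$ to a measure $\hat\mu_{\mathrm{rad}}$ on $\R$ and invokes a one-dimensional Cheeger inequality for $\hat\mu_{\mathrm{rad}}$; you instead apply the $n$-dimensional Cheeger inequality directly to the balls $\{y:|y|\leq\varphi(r)\}$, which cleanly yields $\varphi'(r)\leq\alpha^{-1}f_{\chi_n}(r)/[F_{\chi_n}(r)(1-F_{\chi_n}(r))]$ without transferring the Cheeger hypothesis to an auxiliary one-dimensional measure. Your route is arguably more robust: for $n\geq 2$ the density of $\hat\mu_{\mathrm{rad}}$ vanishes at the origin while $F(1-F)=1/4$ there, so the auxiliary-measure Cheeger bound deteriorates in exactly the near-origin regime you explicitly isolate and treat separately. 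Both approaches then settle the far regime via Mills-ratio estimates for the chi-$n$ density.

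There is, however, a real gap in your near-origin analysis. From the lower bound $\varphi(r)\gtrsim r/g(0)^{1/n}$ you assert ``consequently $\varphi'(r)\lesssim g(0)^{-1/n}$,'' but a pointwise lower bound on $\varphi$ does not by itself control $\varphi'$. Inserting your estimate for $\varphi(r)^{n-1}$ into the transport identity $\varphi'(r)=f_{\chi_n}(r)/\big[\omega_{n-1}\,\varphi(r)^{n-1}g(\varphi(r))\big]$ only yields $\varphi'(r)\lesssim g(0)^{(n-1)/n}/g(\varphi(r))$; you still need the uniform lower bound $g(\varphi(r))\gtrsim c_n\,g(0)$ for $0\leq r\leq r_0$, i.e.\ that the radial profile $g$ loses at most a dimensional constant factor over $[0,R_m]$. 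This is precisely where log-concavity must be used quantitatively, and you do not prove it. It is true, but requires an argument of the following type: if $g(R_m)=g(0)e^{-K}$, log-concavity of $g$ gives $g(s)\leq g(0)e^{-Ks/R_m}$ for $s\geq R_m$, and comparing the mass of $|X|$ above and below $R_m$ (each equal to $1/2$) together with the volume comparison $\omega_n g(R_m)R_m^n\leq 1/2$ forces $K\leq Cn$; hence $g\geq e^{-Cn}g(0)$ on $[0,R_m]$. Until this step is supplied, the near-origin bound, and hence the proposition, is not established.
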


\begin{remark} \label{remark_bobkov}
Bobkov \cite{bobkov2003radial} showed that the optimal constant $\lambda$ in the Poincar\'e inequality for a radially symmetric log-concave random variable satisfies $n\mathbb{E}[|X|^2]^{-1}/12 \leq \lambda \leq n\mathbb{E}[|X|^2]^{-1}$. Since for log-concave measures the square of the Cheeger constant and the Poincar\'e constant are equivalent \cite{ledoux2004buser}, the constant $\alpha$ in the Proposition always exists and is comparable to $\sqrt{n}\mathbb{E}[|X|^2]^{-1/2}$, up to universal constants. 
\end{remark}

\begin{proof}
Let $\mu_{\mathrm{rad}}$ be the law of $|X|$, and $\tilde{T}$ be the optimal transport sending $\gamma_{\mathrm{rad}} := \frac{r^{n-1}}{\sqrt{2\pi}}e^{-r^2/2}dr$ onto $\mu_{\mathrm{rad}}$. The Brenier optimal map sending the Gaussian onto the law of $X$ is then given by $x \longrightarrow \tilde{T}(|x|)x/|x|$. This can be checked by verifying that it sends the Gaussian measure onto the law of $X$ (which is a simple change of variable argument) and that it is the gradient of the convex function $H(|x|)$ with $H' = \tilde{T}$. Since the Brenier map is the only transport map that arises as the gradient of a convex function, $T$ is necessarily the Brenier map. 

We then compute the gradient of the map $T$, which is given by $\nabla T(x)v = \left(\frac{v}{|x|} - \frac{x}{|x|^3}\langle v, x \rangle\right)\tilde{T}(|x|) + \frac{x}{|x|^2}\tilde{T}'(|x|)\langle x, v\rangle$. Since $T(0) = 0$, using the mean value theorem, we therefore have $\nabla T(x)v = \left(v - \frac{x}{|x|^2}\langle v, x \rangle \right)\tilde{T}'(t) + \frac{x}{|x|^2}\tilde{T}'(|x|)\langle x, v \rangle$ for some $t \in (0, |x|)$. From this, we see that to prove the desired upper bound on the eigenvalues of $\nabla T$, it is enough to show that $\tilde{T}'(r) \leq c\sqrt{1 + r^2}$.  

To prove this bound, we consider the symmetrized versions of $\mu_{\mathrm{rad}}$ and $\gamma_{\mathrm{rad}}$ by extending them by symmetry to $\R$, and dividing the density by $2$ so that they are still probability measures. We denote these measures by $\hat{\mu}_{rad}$ and $\hat{\gamma}_{rad}$. These measures are still log-concave, and their Cheeger constants are comparable to those of the original measures, up to a universal constant, via Bobkov's theorem we mentioned in Remark \ref{remark_bobkov}. Moreover, their median is located at $0$. We also extend $\tilde{T}$ to $\R$ by antisymmetry, and denote the function we obtain by $\hat{T}$. It is easy to check that $\hat{T}$ is the optimal map sending $\hat{\gamma}_{\mathrm{rad}}$ onto $\hat{\mu}_{\mathrm{rad}}$. 

Following the same arguments as for the one-dimensional case, denoting by $p$ the density of $\hat{\mu}$, we have for $r \geq 0$
\begin{align*}
\hat{T}'(r) &\geq Cr^{n-1}e^{-r^2/2}/p(F_{\hat{\mu}_{rad}}^{-1}(F_{\hat{\gamma}_{rad}}(r)) \\
&\leq C\alpha r^{n-1}e^{-r^2/2}F_{\hat{\gamma}_{rad}}(r)^{-1}(1 - F_{\hat{\gamma}_{rad}}(r))^{-1}\\
&\leq C\alpha r^{n-1}e^{-r^2/2}(1 - F_{\gamma_{rad}}(r))^{-1}\\
&\leq C\alpha \sqrt{1 + r^2}
\end{align*}
where we have used the estimate $1 - F_{\gamma_{rad}}(r) \geq Ce^{-r^2/2}r^{n-2}$ for large enough $r$, and $C$ was some positive constant that changed from line to line.
\end{proof}

\begin{remark}
In this proof, the log-concavity is only used to ensure that $\hat{\mu}_{rad}$ satisfies a Cheeger inequality with a constant comparable to $\alpha$. This is not necessarily the case for non log-concave radially symmetric measures (for example, the uniform measure on a two-dimensional annulus). 
\end{remark}
 
 \section*{Acknowledgment}
 T.~C.~and A.~P.~were supported  in part by NSF grants CCF-1528132 and CCF-0939370 (Center for Science of Information).  M.~F.~was supported by NSF FRG grant DMS-1361122.

\section{An estimate for the log-determinant function}

\begin{definition} \label{def:strconv}
A twice differentiable function $f: \dom f \to \mathbb{R}$ is  said to be ${\sf m}(x, y)$-strongly convex between $x, y \in \dom f$ if $\nabla^2 f( t x + (1 - t) y ) \geq {\sf m}(x, y) \mathrm{I}$, for all $t \in [0,1]$.
\end{definition}

\begin{lemma} \label{lem:strconv}
For all $t \in [0,1]$, a  ${\sf m}(x, y)$-strongly convex function $f$ between $x$ and $y$ satisfies
\begin{align}
t f(x) + (1 - t) f(y) \geq f(t x + (1 - t) y) + t (1 - t) \frac{{\sf m}(x, y)}{2} | x - y |^2.
\end{align}
\end{lemma}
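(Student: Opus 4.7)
The plan is to reduce this multidimensional strong convexity estimate to its standard one-dimensional analogue by restricting $f$ to the segment joining $x$ and $y$. First I would introduce $g : [0,1] \to \mathbb{R}$ defined by $g(s) := f(sx + (1-s)y)$, which is twice differentiable with
$$g''(s) = (x-y)^\top \nabla^2 f(sx + (1-s)y)\,(x-y).$$
The hypothesis of ${\sf m}(x,y)$-strong convexity between $x$ and $y$ then gives the uniform lower bound $g''(s) \geq {\sf m}(x,y)\,|x-y|^2$ for every $s \in [0,1]$. Note also that $g(0) = f(y)$, $g(1) = f(x)$, and $g(t) = f(tx + (1-t)y)$, so the claim reduces to proving the scalar inequality $t g(1) + (1-t) g(0) \geq g(t) + \tfrac{t(1-t)}{2}\,{\sf m}(x,y)\,|x-y|^2$.

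Next I would apply Taylor's formula with integral remainder centered at the point $s = t$, to obtain
$$g(1) = g(t) + (1-t)\,g'(t) + \int_t^1 (1-s)\,g''(s)\,ds, \qquad g(0) = g(t) - t\,g'(t) + \int_0^t s\,g''(s)\,ds.$$
Taking the convex combination with weights $t$ and $1-t$, the first-order terms $\pm t(1-t)\,g'(t)$ cancel exactly, leaving
$$t g(1) + (1-t) g(0) = g(t) + t\int_t^1 (1-s)\,g''(s)\,ds + (1-t)\int_0^t s\,g''(s)\,ds.$$

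Finally I would substitute the pointwise lower bound on $g''$ and evaluate the elementary integrals $\int_t^1 (1-s)\,ds = (1-t)^2/2$ and $\int_0^t s\,ds = t^2/2$, giving a remainder bounded below by
$${\sf m}(x,y)\,|x-y|^2 \cdot \frac{t(1-t)^2 + (1-t)t^2}{2} = \frac{t(1-t)}{2}\,{\sf m}(x,y)\,|x-y|^2,$$
after factoring the identity $t(1-t)^2 + (1-t)t^2 = t(1-t)$. This yields the claimed inequality. There is no substantive obstacle here: the hypothesis on $\nabla^2 f$ was chosen precisely to produce the constant ${\sf m}(x,y)$, and the proof is just a quantitative one-dimensional Taylor computation along the line segment.
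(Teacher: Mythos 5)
Your proof is correct and uses the same core idea as the paper: expand about $w = tx + (1-t)y$, observe that the first-order terms cancel after taking the $t:(1-t)$ weighted combination, and bound the second-order remainder from below via the Hessian hypothesis. The only differences are technical — the paper applies the multivariate Taylor formula with Lagrange-form remainder and then invokes the monotonicity ${\sf m}(x,w) \geq {\sf m}(x,y)$ and ${\sf m}(y,w) \geq {\sf m}(x,y)$ on subsegments, whereas you restrict $f$ to the line through $x$ and $y$ first and use the integral form of the remainder, which avoids the subsegment argument and makes the emergence of the factor $t(1-t)/2$ fully explicit.
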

\begin{proof}
The Taylor series expansion of $f$ for any two points $a, b \in \dom f$ yields
\begin{align}
f(a) &= f(b) + \langle\nabla f(b), b - a \rangle + \frac{1}{2} \langle b - a, \nabla^2 f(t_0a + (1- t_0)b)  (b - a) \rangle \label{eq1}\\
&\geq f(b) + \langle\nabla f(b), b - a \rangle + \frac{{\sf m}(a, b)}{2} | a - b |^2, \label{eq2}
\end{align}
where equation \eqref{eq1} holds for some $t_0 \in [0,1]$, and inequality \eqref{eq2} follows from Definition \ref{def:strconv}.

Denote $w \triangleq t x + (1 - t) y$. Applying inequality \eqref{eq2} twice yields
\begin{subequations}
\begin{align}
f(x) &\geq f(w) + \langle\nabla f(w), w - x \rangle + \frac{{\sf m}(x, w)}{2} | w - x |^2, \text{ and} \label{sub1}\\
f(y) &\geq f(w) + \langle\nabla f(w), w - y \rangle + \frac{{\sf m}(y, w)}{2} | w - y |^2. \label{sub2}
\end{align}
\end{subequations}
We now multiply equation \eqref{sub1} by $t$ and \eqref{sub2} by $(1 - t)$ and add them to obtain
\begin{align}
t f(x) + (1 - t) f(y) &\geq f(w) + \frac{t (1 - t)^2 {\sf m}(x, w) + t^2 (1 - t) {\sf m}(y, w)}{2} | y - x |^2.
\end{align}
By definition, we have ${\sf m}(x, w) \geq {\sf m}(x, y)$ and ${\sf m}(y, w) \geq {\sf m}(x, y)$, and this proves the lemma.
\end{proof}

\begin{proof}[Proof of Lemma \ref{lem:FrobNormIneq}]
The function  $f(\cdot) = -\log \det (\cdot)$ is known to be strictly convex and twice differentiable in the interior of the positive semidefinite cone. Substituting the definition of function $f$ into Lemma \ref{lem:strconv} yields
\begin{align}
\log \det(t A + (1 - t) B) \geq t \log \det(A)+ (1 - t) \log \det (B) + t (1 - t) \frac{{\sf m}(A, B)}{2} \| A - B \|_F^2.
\end{align}
It is a standard fact that $\nabla^2 f(M) = M^{-1} \otimes M^{-1}$, with $\otimes$ denoting the Kronecker product. The minimum eigenvalue of this Kronecker product is given by $1/\lambda^2_{\mathrm{max}}(M)$, where $\lambda_{\mathrm{max}}(M)$ is the largest eigenvalue of $M$. We therefore have $${\sf m}(A, B) \geq \min_{t \in [0,1]} \frac{1}{ \lambda^2_{\mathrm{max}}(t A + (1 - t)B)}.$$ Using the convexity of the maximum eigenvalue then yields the desired result.

\end{proof}

\bibliographystyle{IEEEtran}
\bibliography{stabilityBib}

\end{document}